\documentclass[sigconf]{acmart}
%%
%% \BibTeX command to typeset BibTeX logo in the docs
\AtBeginDocument{%
  }
\usepackage{fancyhdr}
\pagestyle{empty}

\usepackage{balance} 
\copyrightyear{2026}
\acmYear{2026}
\setcopyright{cc}
\setcctype{by}
\acmConference[WWW '26]{Proceedings of the ACM Web Conference 2026}{April 13--17, 2026}{Dubai, United Arab Emirates}
\acmBooktitle{Proceedings of the ACM Web Conference 2026 (WWW '26), April 13--17, 2026, Dubai, United Arab Emirates}
%\acmPrice{}
\acmDOI{10.1145/3774904.3792182}
\acmISBN{979-8-4007-2307-0/2026/04}

\acmSubmissionID{rfp0806}

\usepackage{algorithm}
\usepackage{algorithmicx}
\usepackage{subcaption}
\usepackage{multirow}
\usepackage{booktabs} 
\usepackage{tabularx}
\usepackage{url}
\usepackage{wrapfig}
\usepackage[noend]{algpseudocode}
\usepackage{booktabs}
\newenvironment{icompact}{
	\begin{list}{$\bullet$}{
			\itemindent -.05em
			\parsep 0pt plus 1pt
			\partopsep 0pt plus 1pt
			\topsep 2pt plus 2pt minus 2pt
			\itemsep 0pt plus 1.3pt
			\parskip 0pt plus 2pt
			\leftmargin 0.13in}
	}
	{\normalsize
	\end{list}
}
\newcommand{\para}[1]{\vspace{2pt}\noindent{\textbf{#1}}\hspace{10pt}\vspace{0.1pt}}

\setlength{\abovecaptionskip}{0.2cm}

%%
%% end of the preamble, start of the body of the document source.
\begin{document}\sloppy

%%
%% The "title" command has an optional parameter,
%% allowing the author to define a "short title" to be used in page headers.
\title{Reconstructing Training Data from Adapter-based Federated Large Language Models}

\settopmatter{authorsperrow=4}

\author{Silong Chen}
%\authornote{Both authors contributed equally to this research.}
\orcid{0009-0003-3904-3374}
\affiliation{%
	\institution{National University of Defense Technology}
	\city{Changsha}
	\country{China}
}
\email{chensilong@nudt.edu.cn}

\author{Yuchuan Luo}
\orcid{0000-0002-0720-4925}
%\authornotemark[1]
\affiliation{%
	\institution{National University of Defense Technology}
	\city{Changsha}
	\country{China}
}
\email{luoyuchuan09@nudt.edu.cn}
\authornote{Yuchuan Luo and Yi Liu are co-corresponding authors.}

\author{Guilin Deng}
\orcid{0000-0001-8352-9759}
\affiliation{%
	\institution{National University of Defense Technology}
	\city{Changsha}
	\country{China}
}
\email{dengguilin@nudt.edu.cn}

\author{Yi Liu}
\orcid{0000-0002-0811-6150}
\affiliation{%
	\institution{City University of Hong Kong}
	\city{Hong Kong}
	\country{China}
}
\email{yiliu247-c@my.cityu.edu.hk}
\authornotemark[1]

\author{Ming Xu}
\orcid{0000-0002-2657-5764}
\affiliation{%
	\institution{National University of Defense Technology}
	\city{Changsha}
	\country{China}
}
\email{xuming@nudt.edu.cn}

\author{Shaojing Fu}
\orcid{0000-0002-7275-8190}
\affiliation{%
	\institution{National University of Defense Technology}
	\city{Changsha}
	\country{China}
}
\email{fushaojing@nudt.edu.cn}

\author{Xiaohua Jia}
\orcid{0000-0001-8702-8302}
\affiliation{%
  \institution{City University of Hong Kong}
  \city{Hong Kong}
  \country{China}
}
\email{csjia@cityu.edu.hk}

\renewcommand{\shortauthors}{Silong Chen et al.}

\begin{abstract}
Adapter-based Federated Large Language Models (FedLLMs) are widely adopted to reduce the computational, storage, and communication overhead of full-parameter fine-tuning for web-scale applications while preserving user privacy. By freezing the backbone and training only compact low-rank adapters, these methods appear to limit gradient leakage and thwart existing Gradient Inversion Attacks (GIAs).

Contrary to this assumption, we show that low-rank adapters create new, exploitable leakage channels. We propose the Unordered-word-bag-based Text Reconstruction (UTR) attack, a novel GIA tailored to the unique structure of adapter-based FedLLMs. UTR overcomes three core challenges—low-dimensional gradients, frozen backbones, and combinatorially large reconstruction spaces—by:
(i) inferring token presence from attention patterns in frozen layers,
(ii) performing sentence-level inversion within the low-rank subspace of adapter gradients, and
(iii) enforcing semantic coherence through constrained greedy decoding guided by language priors. Extensive experiments across diverse models (GPT2-Large, BERT, Qwen2.5-7B) and datasets (CoLA, SST-2, Rotten Tomatoes) demonstrate that UTR achieves near-perfect reconstruction accuracy (ROUGE-1/2 > 99), even with large batch sizes—settings where prior GIAs fail completely. Our results reveal a fundamental tension between parameter efficiency and privacy in FedLLMs, challenging the prevailing belief that lightweight adaptation inherently enhances security. Our code and data are available at \href{https://github.com/shwksnshwowk-wq/GIA}{link}.
\end{abstract}

\begin{CCSXML}
	<ccs2012>
	<concept>
	<concept_id>10002978</concept_id>
	<concept_desc>Security and privacy</concept_desc>
	<concept_significance>500</concept_significance>
	</concept>
	<concept>
	<concept_id>10010147.10010257</concept_id>
	<concept_desc>Computing methodologies~Machine learning</concept_desc>
	<concept_significance>300</concept_significance>
	</concept>
	</ccs2012>
\end{CCSXML}

\ccsdesc[500]{Security and privacy}
\ccsdesc[300]{Computing methodologies~Machine learning}

%%
%% Keywords. The author(s) should pick words that accurately describe
%% the work being presented. Separate the keywords with commas.
\keywords{Federated Learning, Large Language Models, Gradient Inversion Attacks, Federated LLMs}
%% A "teaser" image appears between the author and affiliation
%% information and the body of the document, and typically spans the
%% page.

%\received{20 February 2007}
%\received[revised]{12 March 2009}
%\received[accepted]{5 June 2009}

%%
%% This command processes the author and affiliation and title
%% information and builds the first part of the formatted document.
\maketitle

\section{Introduction}
In modern web applications, such as personalized web agents~\cite{cai2025large,liu2025fedmobile}, intelligent chatbots~\cite{liu2020towards}, and collaborative content generation platforms~\cite{wang2024content}, there is an increasing demand for data and model collaboration across multiple users or devices to enhance performance and user experience. However, such collaboration often requires access to and processing of sensitive user private information, including personal messages, browsing histories, and transaction records, thereby raising serious privacy concerns~\cite{FL}. Direct data sharing poses significant risks of data leakage or misuse, particularly under stringent data protection regulations such as the GDPR~\cite{voigt2017eu} and CCPA~\cite{bonta2022california}. As a result, developing privacy-preserving techniques for collaborative learning has become an important research challenge in both academia and industry \cite{Weather_Forecasting, biomedical_NLP}.

Federated Large Language Model (FedLLM)~\cite{ye2024openfedllm,FedLLMs_ijcai2024p919,kuang2024federatedscope} framework (e.g., Federatedscope-LLM~\cite{kuang2024federatedscope}) has emerged as a promising solution to privacy concerns in collaborative learning by combining the powerful capabilities of LLMs with the privacy-preserving principles of FL. This paradigm enables decentralized model training without the need for centralized data collection~\cite{FL}. In particular, multiple clients collaboratively train a shared global model by exchanging only local model updates (e.g., gradients or parameters) while keeping their raw data securely stored on local devices~\cite{liu2022right}. Given that LLMs typically contain millions or even billions of parameters, adapter-based FedLLM frameworks~\cite{long2024dual,wang2024flora,bai2024federated,cai2023efficient} have become a mainstream approach within FedLLM systems to enhance training efficiency and reduce communication overhead. Adapters are lightweight, parameter-efficient modules (e.g., Low-Rank Adaptation (LoRA)~\cite{hu2022lora}) that can be seamlessly integrated into pre-trained LLMs, enabling fine-tuning for specific tasks with minimal computational and communication costs. By updating only a small subset of parameters, adapter-based FedLLMs achieve efficient model adaptation without compromising performance, making them particularly suitable for resource-constrained environments and expanding their applicability across diverse domains.

Despite these advantages, the security and privacy guarantees of adapter-based FedLLMs\footnote{To avoid ambiguity in terminology, the terms ``adapter-based FedLLMs” and ``FedLLMs” are used interchangeably throughout the remainder of this paper.} may not be as robust as they appear. We argue this for two primary reasons. First, gradient inversion attacks~\cite{data_reconstruction_attack,dager,Gradient_Inversion_Attack,recovering} are a well-established threat that can compromise data privacy by recovering sensitive training data from shared gradients. For instance, Zhu \textit{et al.}~\cite{zhu2019deep} demonstrated the Deep Gradient Leakage (DGL) attack against traditional FL systems, revealing that adversaries can reconstruct private client data from exchanged gradients, posing one of the most severe privacy risks in collaborative learning. Second, similar to standard FL, adapter-based FedLLMs rely on the exchange of model parameters or gradients during training, inherently creating an attack surface for malicious participants or eavesdroppers. While such attacks are generally more challenging to execute in practical FedLLM settings, since existing methods primarily target full-parameter fine-tuning settings~\cite{hugging_face_adapter,bai2024federated,wu2024cardinality,guo2023pfedprompt,LAMP,TAG} where gradients carry richer informational content, the risk is not entirely mitigated. Adapter-based FedLLMs update only a small subset of parameters, which can obscure certain data features and reduce the attack surface. Nevertheless, adaptive adversaries may still refine their techniques to exploit even sparse gradient information, indicating that privacy leakage remains a tangible and evolving threat in this paradigm.

In this paper, we disclose that adapter-based FedLLM frameworks remain vulnerable to serious privacy attacks and propose the Unordered-word-bag-based Text Reconstruction (UTR) attack, which is a new attack designed to recover sensitive training text. UTR pursues two primary objectives: (1) accurately reconstruct private textual training examples, and (2) evade common defensive mechanisms. We further demonstrate that UTR transfers effectively across a range of FedLLM configurations, including different adapter designs and underlying LLM backbones. Achieving such a powerful attack is non-trivial; below we summarize the three principal challenges we confronted:
\begin{icompact}
	\item \textbf{C1: Lower Dimensionality of Updates.} Adapter-based frameworks update only a small fraction of parameters, drastically reducing gradient dimensionality and limiting raw information leakage. Consequently, traditional optimization-based gradient inversion attacks (e.g., DLG~\cite{zhu2019deep}), which depend on rich gradient signals, fail to reconstruct inputs accurately~\cite{TAG,dager}.
	
	\item \textbf{C2: Frozen Backbone.} In adapter-based FedLLM, the backbone of the pre-trained LLM remains frozen, meaning its parameters are not updated during training~\cite{hu2022lora}. Consequently, attackers can only access gradients from the adapter layers, while the key input–output mappings embedded in the attention and transformer layers of the backbone remain hidden. Even state-of-the-art methods such as DAGER~\cite{dager}, which can accurately reconstruct text from attention layers in full-parameter tuning, suffer reduced inversion accuracy when such layers are inaccessible.
	
	\item \textbf{C3: Extensive Solution Space.} The solution space of possible reconstructions expands dramatically due to the large vector space induced by the embedding adapters. Direct traversal or beam search, as used in prior gradient inversion approaches~\cite{LAMP,TAG,dager}, becomes computationally infeasible when the batch size is large (e.g., $\geqslant 32$).
	
\end{icompact}

To address C1, UTR abandons traditional gradient vector fitting methods (e.g., distance-based optimization) and instead leverages the frozen attention layers of the LLM to directly infer token presence from the vocabulary, determining whether a specific token was used during training. To address C2, we observe that adapter layers in frozen LLMs exhibit low-rank properties; UTR exploits the gradient-generated subspace of this structure to estimate the degree of sentence inversion, thereby enhancing reconstruction accuracy. To address C3, UTR incorporates grammatical, semantic, and contextual continuity constraints to drastically reduce the search space, applying a greedy search strategy for efficient and coherent text reconstruction. Moreover, we assess the degree of information leakage in FedLLMs, focusing on how much input data can potentially be inferred from a single gradient update. This provided insights into balancing privacy protection and performance trade-offs.

To validate the effectiveness of our attacks, we perform experiments using GPT-2 \cite{gpt}, BERT \cite{bert}, and Qwen2.5-7B \cite{qwen2.5} on the popular datasets CoLA \cite{cola}, SST \cite{sst}, and Rotten Tomatoes \cite{rt} for batch sizes up to 128. We demonstrate that our attack can reconstruct large batch sizes with ROUGE-1/2 > 99 \cite{rouge} in cases where the parameters are frozen. UTR significantly outperforms the state-of-the-art (SOTA) baselines (i.e., LAMP~\cite{LAMP} and DAGER~\cite{dager}), achieving improvements of at least 26 points for small batch sizes and 90 points for larger ones in terms of reconstruction quality (ROUGE-1/2). We also conduct defensive experiments by incorporating Differential Privacy (DP)~\cite{dwork2006differential} and Gradient Pruning (GP)~\cite{xue2024revisiting} into the adapter updates. The results show that DP is a more effective defense method compared to GP, but it significantly undermines the model effectiveness and is not applicable to the real world. We summarize our contributions here:
\begin{icompact}
	\item We demonstrate that even when most model parameters are frozen, adapter-based FedLLMs can still be reconstructed by an adversary with high probability.
	
	\item We propose a UTR attack to efficiently reconstruct specific discrete inputs in FedLLMs. Furthermore, we tailor the attack algorithm to different usage conditions.
	
	\item We show the extent of information leakage in FedLLMs, focusing on the conditions under which the model leaks more information based on the provided input. We find that the unidirectional attention mechanism, commonly used in generative LLMs, may pose a significant risk of information leakage.
	
    \item We experimentally show that our algorithm can not only reconstruct the input exactly but also effectively reconstruct complete batches of size $b$ up to $128$ in the adapter-based FedLLMs. The results indicate that our algorithm can also achieve high accuracy with ROUGE-1/2 > 99.
\end{icompact}

\begin{figure}[!t]
	\centering
	\includegraphics[width=1.0\columnwidth]{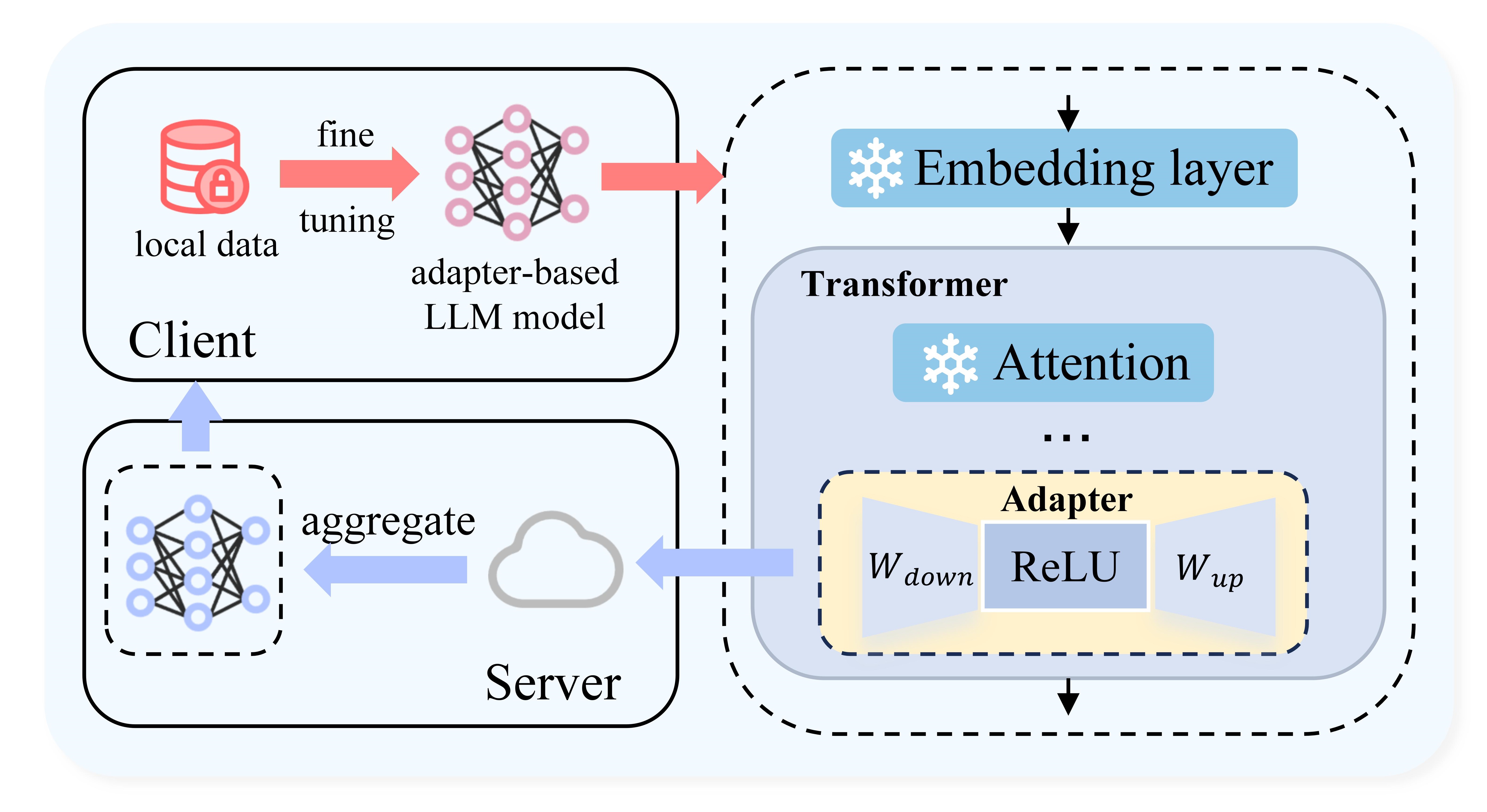}
	\caption{Standard architectures of adapter-based FedLLMs.}\label{arch}
	\vspace{-0.5cm}
\end{figure}

\section{Related Work}
%---------------------
\label{B}

\subsection{Adapter-based FedLLMs}
\label{B1}

With the continuous development of LLMs, the parameter size of these models has increased significantly \cite{cai2023efficient}, amplifying the need for high-quality public data. FedLLMs naturally emerge as a solution to this challenge and have become a key direction in LLM development. However, due to the enormous parameter size of LLMs, directly applying FL for pre-training or full-parameter fine-tuning leads to substantial computational, storage, and resource overhead, which is impractical for most LLM users~\cite{Adapter_efficient1}. Therefore, reducing resource overhead is the primary challenge for achieving FedLLMs.

Current approaches include partial parameter fine-tuning \cite{Bitfit}, adapter-based fine-tuning \cite{FedLLMs_FedDAT,adapter_Heterogeneous,MAD-X}, LoRA fine-tuning \cite{LoRA,hu2022lora,kuang2024federatedscope}, prompt tuning \cite{Parameter-Efficient_Prompt_Tuning,guo2023pfedprompt}, and prefix tuning \cite{P-tuningV2}. Among these, adapter-based fine-tuning, as illustrated in Figure~\ref{arch}, provides a practical trade-off between efficiency, performance, and privacy. By freezing the base model and training only lightweight adapter modules, this approach substantially reduces both communication and computation costs, making it well-suited for resource-constrained clients. It preserves the pretrained knowledge of the backbone while enabling effective personalization and multi-task adaptation across clients. Because only small adapter updates are shared, privacy risks and aggregation overhead are mitigated, while the modular design supports seamless integration of task- or domain-specific expertise \cite{long2024dual}. Overall, adapter-based FedLLMs achieve better scalability and practicality than full fine-tuning \cite{Adapter_efficient1}, while typically offering stronger performance and flexibility than simpler approaches such as prompt or prefix tuning.

\subsection{Gradient Inversion Attacks against FedLLMs}
Gradient inversion attacks (GIAs) \cite{Gradient_Inversion_Attack} aim to reconstruct training data by exploiting shared model gradients or parameters. These attacks pose severe privacy risks by exposing sensitive client information, motivating extensive research into privacy-preserving learning techniques. Early GIAs primarily targeted the image domain, where methods such as DGL \cite{zhu2019deep} reconstruct images by minimizing the distance between reconstructed and original gradients. Building upon DGL, TAG \cite{TAG} adapts this idea to Transformer-based models, improving textual reconstruction quality and reducing divergence from the true data. To overcome the inherent difficulty of recovering discrete text from continuous gradients, LAMP \cite{LAMP} introduces an alternating continuous–discrete optimization strategy, achieving better reconstruction fidelity than TAG.

A conceptual shift was introduced by FILM \cite{recovering}, which departs from optimization-based approaches. FILM leverages gradient information from the embedding layer and the discrete nature of textual inputs to infer words directly. However, its performance is constrained by limited token sets and heuristic inference rules. DAGER \cite{dager} improves FILM by exploiting the one-to-one mapping between embedding outputs and discrete tokens, achieving nearly perfect reconstruction for small vocabularies—but at the cost of substantial computational inefficiency due to brute-force search. 

In summary, existing GIAs largely focus on FedLLMs employing full-parameter fine-tuning, relying on gradients from embedding or Transformer layers for reconstruction. However, full-parameter tuning is computationally expensive and unrepresentative of mainstream adapter-based FedLLM paradigms, which typically freeze pre-trained parameters to reduce resource consumption. Consequently, the privacy risks of adapter-based FedLLMs remain underexplored, highlighting the need for systematic evaluation of their vulnerability to reconstruction attacks.

\section{Background and Threat Model}
\label{s3}
%In this section, we describe the relevant background that is necessary for our work.

%\subsection{\textbf{Definitions}: Large Language Models and Federated Learning}

%\textbf{Adapter-based Federated Large Language Models.}
\subsection{Adapter-based FedLLMs}
\para{Overview.} We begin by introducing the core components of the adapter-based FedLLM architecture, as illustrated in Fig. \ref{arch}. The system comprises a server, multiple clients, the frozen parameters of a pre-trained LLM, and trainable adapter modules that can be inserted either within or between layers. The primary purpose of these adapters is to substantially reduce the number of trainable parameters, thereby lowering the computational, storage, and communication overhead. Based on their insertion locations, adapters can be categorized into two types: embedding adapters and layer adapters. Adapters placed after the embedding layer are referred to as embedding adapters, while those inserted within Transformer layers are called layer adapters. For example, the MAD-X~\cite{MAD-X} framework integrates adapters after the embedding layer, achieving strong cross-lingual and cross-task performance~\cite{MAD-X, MAD-X_1, MAD-X_2}, and has since been incorporated into the Hugging Face library~\cite{hugging_face_adapter}.

\para{Adapter.} As depicted in Fig. \ref{arch}, each adapter is a lightweight, trainable module typically composed of three components: a downsampling layer, an activation function, and an upsampling layer~\cite{adapter_structure,AdapterFusion}. The downsampling layer projects high-dimensional hidden representations into a lower-dimensional bottleneck space, while the upsampling layer restores them to the original dimensionality. Although specific adapter implementations may vary, most share this fundamental design. A key hyperparameter in adapter configuration is the bottleneck dimension ($\mathit{d_{bottleneck}}$.), which is determined by the reduction factor ($\mathit{reduction_{factor}}$). The reduction factor defines the ratio between the hidden layer dimension ($d_{hidden}$) and the bottleneck dimension, formally expressed as:
\begin{equation}
\mathit{reduction_{factor} = \frac{d_{hidden}}{d_{bottleneck}}}
\end{equation}

\para{Discussion.} This paper focuses exclusively on adapters following the aforementioned architecture, as they are the most widely adopted in real-world deployments. This
architecture is the basis for influential frameworks like MAD-X and is a standard option in
libraries like FATE and Hugging Face \textit{adapters}. On one hand, most existing studies employ this design, ensuring consistency and comparability with prior work. On the other hand, this architecture aligns with practical considerations in FedLLM scenarios, where computing resources are typically constrained, making parameter-efficient designs essential for feasible and scalable deployment. The attack would not directly apply to other PEFT approaches, which are out of scope.

\subsection{Threat Model}

\para{Adversary's Goal.} Following the attack goals established in prior work, the adversary seeks to reconstruct as much of a victim client’s training data as possible using only the observable gradient information, thereby compromising client privacy. We consider the attack successful when the reconstructed output meaningfully matches the original training data—for example, when it attains high semantic similarity or otherwise preserves the sensitive content of the true examples.

\para{Adversary's Identity.} The adversary could be any entity aiming to extract private information from the client. Examples include, but are not limited to, an honest-but-curious server, a rogue employee of a server company, a malicious third party controlling the server, a competing company, or a hacker. The method of obtaining the client's gradients and the associated difficulty vary among these adversaries. For instance, an honest-but-curious server can efficiently perform the attack due to its role in model selection, parameter distribution, and gradient aggregation, which inherently provides the necessary knowledge and access. Similarly, rogue employees benefit from comparable advantages. In contrast, adversaries such as competing companies or hackers may need to steal gradient information actively through targeted attacks. For simplicity, we focus on the honest-but-curious server as the adversary in this study, representing the most common attack model in FedLLMs.

\para{Adversary's Background Knowledge.} As noted above, we assume that the adversary has full knowledge of the pre-trained model’s architecture and parameters. In addition, since FedLLM training typically proceeds over multiple epochs, the adversary is assumed to observe the gradients shared by the victim client during each training round. However, the adversary cannot collude with any subset of clients and has no direct access to, or prior knowledge of, the client’s local training data.%Additional knowledge, such as task type and data format, can further facilitate the attack and is typically easy to obtain.

\section{Our Proposed UTR}
\subsection{Overview}
To address the unique challenges of FedLLMs, i.e., the low dimensionality of updates, the frozen model backbone, and the extensive solution space, we propose UTR attack (see Fig.~\ref{framework}), a two-stage pipeline designed to systematically tackle these core issues. The key insight behind UTR is that coarse token recognition can be progressively refined into fine-grained sequence reconstruction by exploiting the intrinsic structural properties of adapter modules and frozen LLMs, enabling high-fidelity recovery of private training data. Next, we introduce the workflow of each stage as follows:

\para{Stage 1: Word Bag Inference.} This initial stage targets the embedding adapter to infer an unordered set of vocabulary tokens present in the client's private input. By leveraging the frozen embedding layer and the gradients from the embedding adapter, this stage bypasses traditional continuous optimization. It directly reduces the reconstruction search space, thereby addressing the challenge of low-dimensional updates (i.e., C1) (see \S\ref{Sub:WI}).

\para{Stage 2: Data Inference.} This subsequent stage takes the output ``word bag'' from the first stage and assembles coherent text sequences. It employs a constrained search, guided by grammatical, semantic, and contextual filters to prune invalid token combinations. By using gradients from the layer adapter to perform membership inference, this stage navigates the extensive solution space (i.e., C3) and compensates for the inaccessibility of the frozen backbone (i.e., C2) (see \S\ref{Sub:DI}).

\para{Discussion.} UTR deploys the two stages in a sequential pipeline, enabling near-exact reconstruction of client data even at large batch sizes and revealing a critical vulnerability. Importantly, UTR is model-agnostic, targeting the adapter modules as a universal attack surface to achieve high effectiveness across various LLM architectures and adapter configurations. Finally, we establish a Text Reconstruction Ability Assessment (see \S\ref{TRAA}) to quantify the fundamental limits of this attack. This assessment defines the attack's capability as the maximum number of words it can accurately reconstruct, which is fundamentally determined by the representational capacity of the low-rank space $S$ derived from the adapter's gradients.

\begin{figure}[!t]
	\centering
	\includegraphics[width=\columnwidth]{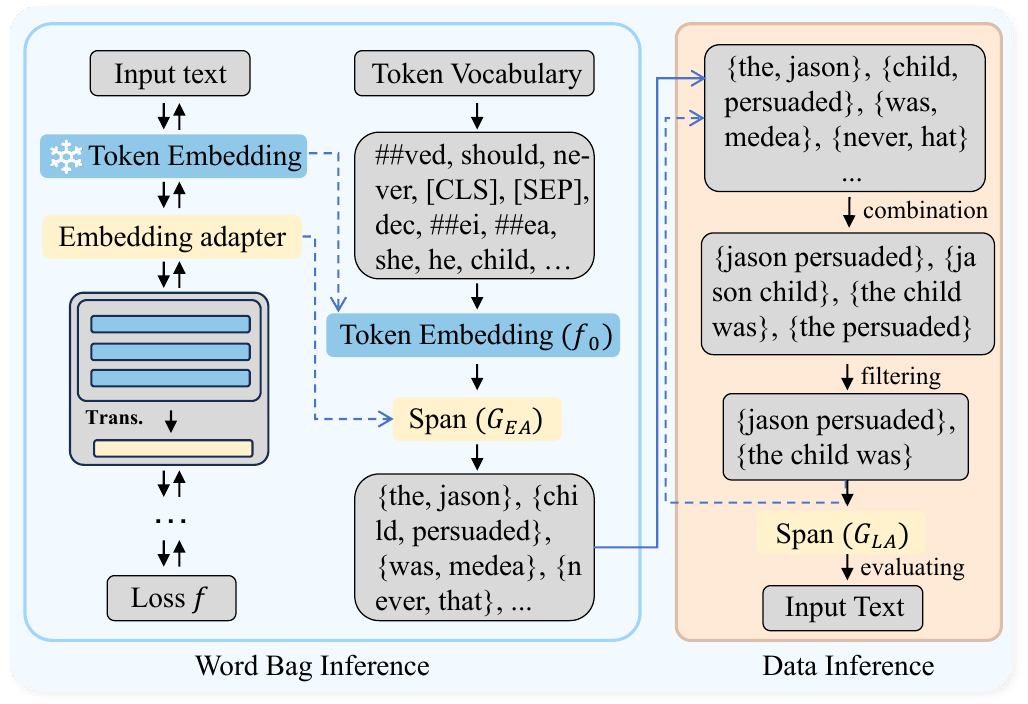}
	\caption{Overview of the proposed UTR attack.}\label{framework}
	\vspace{-0.5cm}
\end{figure}

\subsection{``Word Bag'' Inference Stage} 
\label{Sub:WI}

\para{Key Idea.} The first stage identifies vocabulary tokens likely present in the client's input. Exploiting gradients from the embedding adapter, UTR evaluates vocabulary items within the embedding space to determine token membership. This produces a ``word bag''—a set of candidate tokens inferred to appear in the original input. Compared with traditional optimization-based methods, this approach avoids continuous vector fitting and leverages the discrete mapping between embeddings and tokens, substantially reducing the reconstruction search space. Specifically, in adapter-based FedLLMs, frozen embeddings can be treated as a deterministic function $f_0$, which maps input word indices to embedding vectors. These embedding vectors are then passed to the embedding adapter as input. ``Word bag'' inference targets the embedding adapter by iterating over all words $x$ in the model's vocabulary, inputting them into the embedding layer, and checking whether the output $f_0(x)$ matches the embedding adapter's input data.

In this context, the ``word bag'' inference targeting an adapter essentially involves using the gradient information from the adapter to infer whether specific input data \(x\) belongs to the client's input dataset. Here, we first analyze the properties of neurons in a generally fully connected (FC) layer. Let the input to the fully connected layer be a vector \(\textbf{\textit{f}}(x)\). For any neuron in this layer with weight \(\mathbf{W}\) and bias \(b\), if the input vector \(\textbf{\textit{f}}(x)\) activates the neuron, then the output \(y\) of the neuron can be expressed as:
\begin{equation}
	\begin{split}
		y &= \mathrm{ReLU}(\mathbf{W} \cdot \textbf{\textit{f}}(x) +b) \\
		&= \begin{cases}
			\mathbf{W} \cdot \textbf{\textit{f}}(x) +b, & \text{if } \mathbf{W} \cdot \textbf{\textit{f}}(x) +b > 0 \\
			0, & \text{if } \mathbf{W} \cdot \textbf{\textit{f}}(x) +b \leq 0 
		\end{cases} \label{eq8}
	\end{split}.
\end{equation}
According to the chain rule, the gradients of the loss \(L\) with respect to the weight \(\mathbf{W}\) and bias \(b\) are calculated as follows:
\begin{equation}
	\begin{cases}
		\frac{\partial L}{\partial b} = \frac{\partial L}{\partial y} \cdot \frac{\partial y}{\partial b} \\
		\frac{\partial f}{\partial \mathbf{W}} = \frac{\partial L}{\partial y} \cdot \frac{\partial y}{\partial \mathbf{W}}=\frac{\partial L}{\partial y} \cdot \textbf{\textit{f}}(x)
	\end{cases}.\label{eq9}
\end{equation}

Since \({\partial y}/{\partial b}=1 \), substituting this into Eq. \eqref{eq9} yields:
\begin{equation}
	\frac{\partial f}{\partial \mathbf{W}} = \frac{\partial L}{\partial b} \cdot \textbf{\textit{f}}(x). \label{eq10}
\end{equation}

From Eq. \eqref{eq10}, \(\textbf{\textit{f}}(x)\) can be calculated as follows:
\begin{equation}
	\textbf{\textit{f}}(x) = \frac{\frac{\partial f}{\partial \mathbf{W}}}{\frac{\partial L}{\partial b}}= \frac{\nabla \mathbf{W}}{\nabla b}, \label{eq11}
\end{equation}
where \(\nabla \mathbf{W}\) and \(\nabla b\) represent the gradients of the weights and biases, respectively. The ratio \(\nabla \mathbf{W}/\nabla b\) is defined as the Ratio of Weight to Bias Gradient (RWBG).

When multiple vectors are fed into the fully connected layer in a batch \(\textbf{\textit{f}}(X)\), it is possible that more than one vector simultaneously activates the neuron. In this case, the relationship between the gradient of the neuron's weight and bias and the input data can be described as follows:
\begin{equation}
	\frac{\nabla \mathbf{W}}{\nabla b}= \sum_i \textbf{\textit{f}}(X)^i,  \label{eq12}
\end{equation}
where \(\textbf{\textit{f}}(X)^i\) is a vector in \(\textbf{\textit{f}}(X)\) which can activate the neuron.

For the network architecture shown in Fig.~\ref{arch}, by calculating the RWBGs for all neurons in FC1, a linear space \(\textbf{\textit{S}}\) can be constructed based on the RWBGs, as follows:
\begin{equation}
	\textbf{\textit{S}} =   \begin{bmatrix}
		\frac{\nabla \mathbf{W}_{m \cdot n}^{\mathrm{up}_1}}{\nabla  \mathbf{B}_{\mathrm{m}}^{\mathrm{up}_1}} \\
		\ldots \\
		\frac{\nabla \mathbf{W}_{m \cdot n}^{\mathrm{up}_i}}{\nabla  \mathbf{B}_{\mathrm{m}}^{\mathrm{up}_i}}
	\end{bmatrix}. \label{eq13}
\end{equation}

This linear space \(\textbf{\textit{S}}\) is essentially spanned by all the hidden embedding vectors fed into the linear layer. Since \(n>m \), the space \(\textbf{\textit{S}}\) is a low-rank space, which results in a limited representational capacity. The hidden embedding vectors of LLMs are strongly correlated with the input data. Therefore, if there exists a hidden embedding vector in  \(\textbf{\textit{f}}(X)\) belongs to the space \(\textbf{\textit{S}}\),  it can be inferred that \(X\) is part of the current training batch of the client. The method to determine whether a hidden embedding vector belongs to the space \(\textbf{\textit{S}}\) is to check if the hidden embedding vector can be expressed as a linear combination of the basis vectors of space \(\textbf{\textit{S}}\). This is a straightforward computation in linear algebra.

\subsection{Data Inference Stage}
\label{Sub:DI}

\para{Key Idea.} Following the Word Bag Inference stage, which produces an unordered set of candidate tokens $W_b$, the Data Inference stage aims to assemble these tokens into the original, coherent training text sequences. This stage systematically constructs and validates candidate sentences against the client's private data by leveraging gradients from the layer adapter and applying a series of filtering constraints.

The core of this stage is a structured search process, as formalized in Algorithm \ref{alg:atk} (Lines 12-34). The algorithm constructs candidate sentences by iteratively combining tokens from $W_b$. To manage the exponential complexity of this search space (C3), it employs a set of Boolean filter functions after each token extension: \textsf{EICW()} (Consecutive Identical Word Detection), \textsf{CG()} (Grammar Check) and \textsf{CS()} (Semantic Check). The specific implementation details are provided in Appendix \ref{algo}.

The final and critical step is membership inference, which verifies whether a candidate sentence matches the client's true training data. This is achieved by feeding the candidate sequence through the FedLLM to obtain the corresponding hidden embeddings $f_i(x)$ that serve as input to the first layer adapter. The core insight is that for the correct candidate, these hidden embeddings $f_i(x)$ must lie within the low-rank subspace $S_{LA}$ spanned by the basis vectors constructed from the client's shared layer adapter gradients (as derived in Eq. \eqref{eq12}). The verification is performed by checking if $f_i(x)$ can be expressed as a linear combination of the basis of $S_{LA}$, a direct computation in linear algebra.

\para{Impact of LLM Architecture.} The architecture of the underlying LLM influences the reconstruction outcome. For models with bidirectional attention mechanisms (e.g., BERT), the hidden embeddings are informed by the entire input context, enabling the reconstruction of the complete text in a single step. In contrast, for models with unidirectional attention (e.g., GPT), the hidden embedding for a token is only influenced by preceding tokens. This often results in the reconstruction of the text in a truncated or prefix-wise manner, requiring more candidate samples to assemble the full sequence.

Through this process of constrained candidate generation and gradient-based membership verification, the Data Inference stage successfully transforms an unordered bag of words into accurate reconstructions of the client's private training texts. The algorithm description of the technical details of the above two stages is summarized in Algorithm \ref{alg:atk} in the Appendix~\ref{algo}.

\subsection{Text Reconstruction Ability Assessment}
\label{TRAA}
The core mechanism of text reconstruction attacks involves inferring possible input words from the model's vocabulary (``word bag'') and combining them into coherent text. The attack's capability is therefore defined as the maximum number of words the algorithm can accurately reconstruct. When the client's input exceeds this threshold, accurate reconstruction becomes infeasible. 

We then analyze this threshold mathematically. Let a client's training batch contain $b$ sequences comprising a total of $n$ unique tokens. Each token $t_i$ is mapped through the frozen embedding layer to a hidden representation vector $v_i \in \mathbb{R}^{d_{\text{hidden}}}$. The downsampling layer of the adapter, implemented as a fully connected layer, performs the following transformation, i.e., $Z = \text{ReLU}(W V + b)$, where $W \in \mathbb{R}^{d_{\text{bottleneck}} \times d_{\text{hidden}}}$ is the weight matrix, $V \in \mathbb{R}^{d_{\text{hidden}} \times n}$ is the matrix of input hidden embeddings (each column corresponds to a token embedding $v_i$), $b \in \mathbb{R}^{d_{\text{bottleneck}}}$ is the bias vector, and $Z \in \mathbb{R}^{d_{\text{bottleneck}} \times n}$ is the activated output.

\para{Prior knowledge.} As mentioned in \S\ref{Sub:WI}, for the ``word bag'' inference to be successful, the hidden embedding vector $v$ of a token must lie within the low-rank subspace $S$ spanned by the client's actual input data. This is determined by analyzing the gradients. Following the Eqs. \eqref{eq10}--\eqref{eq13}, for a single neuron $j$ in the downsampling layer, the relationship between its weight gradient ($\nabla W_j$), bias gradient ($\nabla b_j$), and the input data is given by the RWBG:
\begin{equation}
    RWBG_j = \nabla W_j / \nabla b_j = \Sigma_{l \in A_j} v_l,   
\end{equation}
where $A_j$ is the set of indices of input vectors that activated neuron $j$ (i.e., $W_j \cdot v_l + b_j > 0$). The subspace $S$ is then constructed as the span of all these RWBG vectors across the $d_{bottleneck}$ neurons:
\begin{equation}
    S = span\{RWBG_1, RWBG_2, \ldots, RWBG_{d_{bottleneck}}\}    
\end{equation}
Since each $RWBG_j$ is a linear combination of the input vectors ${v_l}$, the subspace $S$ is a subspace of $span\{V\}$, the column space of the input embedding matrix. Thus, the representational capacity of the attack is fundamentally constrained by the rank of the subspace $S$.

\begin{theorem}\label{th-span-dim}
Let $V$ be a vector space over a field $F$. If $S$ is a subspace of $V$ spanned by a set of $m$ vectors, then $\dim(S) \leq m$.
\end{theorem}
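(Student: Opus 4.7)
The plan is to prove the bound by extracting a basis of $S$ from the given spanning set of $m$ vectors, since $\dim(S)$ is by definition the cardinality of any basis. Let the spanning set be $\{v_1, \ldots, v_m\} \subseteq S$, so that $S = \mathrm{span}\{v_1, \ldots, v_m\}$. My strategy is the standard ``pruning'' argument: iteratively discard any vector that is linearly dependent on the remaining ones, and show that (i) the span is preserved at every step, and (ii) the final set is linearly independent. Since the procedure only removes elements from a finite set, it terminates in at most $m$ steps and yields a set of size $k \leq m$.

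First I would formalize the pruning step. If at some stage the current subset is $\{v_{i_1}, \ldots, v_{i_r}\}$ and one of its elements $v_{i_j}$ lies in the span of the others, then removing $v_{i_j}$ does not change the span: any linear combination using $v_{i_j}$ can be rewritten using the remaining vectors by substituting the dependence relation. Repeating this until no such $v_{i_j}$ remains produces a subset $B \subseteq \{v_1, \ldots, v_m\}$ that still spans $S$ and contains no vector expressible as a linear combination of the others; a standard equivalence gives that $B$ is linearly independent. Hence $B$ is a basis of $S$.

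Next I would invoke the well-definedness of dimension: any two bases of a finite-dimensional subspace have the same cardinality (this is an immediate consequence of the Steinitz exchange lemma, which I would cite rather than reprove). Therefore $\dim(S) = |B| \leq m$, establishing the claim. If $S = \{0\}$, the pruning removes everything and we interpret $\dim(S) = 0 \leq m$ as the boundary case.

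The only subtlety worth flagging is the appeal to the existence of bases and the invariance of cardinality in the possibly non-finitely-generated ambient space $V$; however, since $S$ is by hypothesis spanned by a \emph{finite} set of size $m$, the entire argument takes place in a finitely generated subspace and the classical exchange lemma applies without set-theoretic complications. In short, there is no genuine obstacle here---this is a textbook consequence of the definition of dimension---so the ``hard part'' is merely being explicit that pruning preserves the span and that the resulting independent spanning set qualifies as a basis whose size upper-bounds $\dim(S)$.
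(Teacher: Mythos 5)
Your proof is correct and follows essentially the same route as the paper's: both extract a basis of $S$ as a subset of the given spanning set and conclude $\dim(S) \leq m$ from its cardinality. Your version merely spells out the pruning step and the invariance of basis cardinality that the paper's terser case split (independent vs.\ dependent spanning set) leaves implicit.
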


\begin{theorem}\label{th-1}
(Rank-Capacity.) The maximum number of linearly independent hidden embedding vectors $v_i$ that can be perfectly represented in the subspace $S$ is at most $rank(S)$. Further imore, $rank(S) \le min(d_{bottleneck}, n, d_{hidden})$. 
\end{theorem}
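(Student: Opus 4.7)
My plan is to dissect Theorem \ref{th-1} into its two distinct claims and dispatch each by elementary linear algebra, using Theorem \ref{th-span-dim} and the explicit spanning description of $S$ from Eqs. \eqref{eq12}--\eqref{eq13} as the only nontrivial ingredients.

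First I would handle the rank-capacity statement. Suppose $v_{i_1},\ldots,v_{i_k}$ are hidden embeddings that all lie in $S$ and are linearly independent; then they form a linearly independent set inside the subspace $S$, and by the definition of dimension we immediately get $k \leq \dim(S) = \mathrm{rank}(S)$. This is essentially a restatement of what ``rank'' means and costs nothing beyond invoking the definition.

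The main work is the three-way bound $\mathrm{rank}(S) \leq \min(d_{bottleneck}, n, d_{hidden})$, which I would prove by establishing each inequality separately and then taking the minimum. For $\mathrm{rank}(S) \leq d_{bottleneck}$, note that by construction $S = \mathrm{span}\{RWBG_1,\ldots,RWBG_{d_{bottleneck}}\}$ is spanned by exactly $d_{bottleneck}$ vectors, so Theorem \ref{th-span-dim} yields the bound. For $\mathrm{rank}(S) \leq n$, I would invoke the structural identity $RWBG_j = \sum_{l \in A_j} v_l$ to observe that every generator of $S$ is a linear combination of $\{v_1, \ldots, v_n\}$; hence $S \subseteq \mathrm{span}\{v_1, \ldots, v_n\}$, and a second application of Theorem \ref{th-span-dim} gives $\dim(S) \leq n$. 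Finally, $\mathrm{rank}(S) \leq d_{hidden}$ follows because $S$ lives inside the ambient space $\mathbb{R}^{d_{hidden}}$ in which each $v_i$ resides.

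I do not anticipate a real obstacle: the theorem is essentially a bookkeeping corollary of the explicit span description of $S$, combined with the standard dimension inequality of Theorem \ref{th-span-dim}. The subtle point worth emphasizing in the write-up is the inclusion $S \subseteq \mathrm{span}\{v_1, \ldots, v_n\}$, since it is precisely this containment that ties the attack's representational capacity to the client's actual token set rather than the ambient vocabulary, and it underwrites the interpretation of the bound as a fundamental limit on how much unordered-bag information can be extracted from a single adapter gradient.
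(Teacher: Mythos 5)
Your proposal is correct and follows essentially the same route as the paper's own proof: bound $\mathrm{rank}(S)$ by $d_{bottleneck}$ via the spanning set and Theorem \ref{th-span-dim}, bound it by $\min(n, d_{hidden})$ via the containment $S \subseteq \mathrm{span}\{v_1,\ldots,v_n\}$ induced by $RWBG_j = \sum_{l \in A_j} v_l$, take the minimum, and invoke the definition of dimension for the capacity claim. No meaningful differences to report.
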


\begin{proof}
 For the specific proof details of Theorems \ref{th-span-dim} and \ref{th-1}, please refer to the Appendix~\ref{proof}. Based on the theorem, we can get the Corollary below directly.   
\end{proof}

\begin{corollary}
(Maximum Recoverable Tokens.) The attack can, at most, uniquely identify and recover a number of tokens $k_{max}$ such that $k_{max} \le rank(S) \le d_{bottleneck}$.  
\end{corollary}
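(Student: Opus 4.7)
The plan is to obtain the corollary as an immediate consequence of Theorems~\ref{th-span-dim} and~\ref{th-1} combined with the explicit construction of $S$ from the RWBG vectors given in Eq.~\eqref{eq13}. I would proceed in two chained steps: first show that any set of $k$ tokens that the attack can \emph{uniquely} recover induces $k$ linearly independent hidden embeddings sitting inside $S$, and then bound the number of such independent vectors by $d_{\text{bottleneck}}$ via the span--dimension inequality.

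For the first step, I would revisit the membership criterion of the Word Bag Inference stage (\S\ref{Sub:WI}): a token $t$ is accepted exactly when its frozen embedding $f_0(t)$ lies in $S$. If a purported recovery of $k$ distinct tokens produced embeddings $\{v_{i_1},\ldots,v_{i_k}\}$ that were linearly dependent, then some $v_{i_j}$ could be written as a combination of the others, and by the additive RWBG identity in Eq.~\eqref{eq12} its presence in the batch would be indistinguishable from a compensating alternative subset with the same RWBG signature. Unique recovery therefore forces linear independence, and Theorem~\ref{th-1} immediately yields $k_{\max} \le \mathrm{rank}(S)$.

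For the second step, I would appeal directly to the construction of $S$: by Eq.~\eqref{eq13}, $S$ is spanned by exactly $d_{\text{bottleneck}}$ RWBG vectors, one per neuron of the downsampling layer. Applying Theorem~\ref{th-span-dim} with $m = d_{\text{bottleneck}}$ gives $\dim(S) = \mathrm{rank}(S) \le d_{\text{bottleneck}}$. Chaining this with the bound of the previous step produces the claimed inequality $k_{\max} \le \mathrm{rank}(S) \le d_{\text{bottleneck}}$.

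The hardest part I anticipate is rigorously pinning down the informal phrase ``uniquely identify and recover'' used in the corollary statement. The gradient-based membership check certifies only that a candidate vector belongs to $S$; uniqueness is the additional requirement that no distinct subset of tokens produces the same RWBG signature. The cleanest formalization, which I would adopt, is to define uniqueness operationally as the non-existence of a confounding alternative subset, and then observe that any newly admitted token whose embedding is linearly dependent on previously recovered ones trivially admits such a confounder. With that definition in place, the linear-independence conclusion, and hence the corollary, follows directly from the two theorems without further computation.
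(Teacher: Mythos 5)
Your proposal is correct and follows essentially the same route as the paper, which simply chains Theorem~\ref{th-1} ($k_{\max}\le\mathrm{rank}(S)$) with the span--dimension bound of Theorem~\ref{th-span-dim} applied to the $d_{\text{bottleneck}}$ RWBG spanning vectors to conclude $\mathrm{rank}(S)\le d_{\text{bottleneck}}$. The only difference is that you explicitly justify why ``unique recovery'' forces linear independence via the confounding-subset argument, a step the paper leaves implicit by declaring the corollary an immediate consequence of the theorem; this added rigor is compatible with, and strengthens, the paper's intended reading.
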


\begin{table*}[!t]
	\renewcommand{\arraystretch}{1.1}
	\caption{Attack performance evaluation results. The R-1 and R-2 denote the ROUGE-1 and ROUGE-2 scores, respectively.}
	\label{tab:performance}
	\centering
	\vspace{-5pt}
	\resizebox{0.65\textwidth}{!}{
		\begin{tabular}{cccccccccccc}
			\hline
			\multirow{2}{*}{\textbf{Batch}} & \multirow{2}{*}{\textbf{Method}} & \multirow{2}{*}{\textbf{Metrics}} & \multicolumn{3}{c}{\textbf{BERT}} & \multicolumn{3}{c}{\textbf{GPT2}} & \multicolumn{3}{c}{\textbf{Qwen}} \\ \cmidrule(r){4-6} \cmidrule(r){7-9} \cmidrule(r){10-12} 
			&  &  & \textbf{CoLA} & \textbf{SST} & \textbf{RT} & \textbf{CoLA} & \textbf{SST} & \textbf{RT} & \textbf{CoLA} & \textbf{SST} & \textbf{RT} \\ \hline
			\multirow{6}{*}{$B=1$} & \multirow{2}{*}{Ours} & R-1 & \textbf{100} & \textbf{100} & \textbf{100} & \textbf{100} & \textbf{100} & \textbf{100} & \textbf{100} & \textbf{100} & \textbf{100} \\ 
			&  & R-2 & \textbf{100} & \textbf{100} & \textbf{100} & \textbf{100} & \textbf{99} & \textbf{100} & \textbf{100} & \textbf{100} & \textbf{100} \\ \cline{2-12} 
			& \multirow{2}{*}{LAMP} & R-1 & 73.3 & 62.2 & 31.4 & 89.6 & 88.8 & 64.7 & / & / & / \\ 
			&  & R-2 & 43.3 & 31.8 & 9.3 & 51.9 & 56.8 & 16.5 & / & / & / \\ \cline{2-12} 
			& \multirow{2}{*}{DAGER} & R-1 & 100 & 100 & 100 & 100 & 100 & 100 & 0 & 0 & 0 \\ 
			&  & R-2 & 100 & 100 & 100 & 100 & 100 & 100 & 0 & 0 & 0 \\
			\hline
			\multirow{6}{*}{$B=2$} & \multirow{2}{*}{Ours} & R-1 & \textbf{100} & \textbf{100} & \textbf{100} & \textbf{99.83} & \textbf{100} & \textbf{100} & \textbf{100} & \textbf{100} & \textbf{100} \\ 
			&  & R-2 & \textbf{100} & \textbf{100} & \textbf{100} & \textbf{99.83} & \textbf{100} & \textbf{100} & \textbf{100} & \textbf{100} & \textbf{100} \\ \cline{2-12} 
			& \multirow{2}{*}{LAMP} & R-1 & 26.8 & 21.4 & 11.2 & 77.8 & 82.4 & 46.4 & / & / & / \\ 
			&  & R-2 & 11.0 & 9.2 & 0.9 & 31.5 & 45.7 & 7.6 & / & / & / \\ \cline{2-12} 
			& \multirow{2}{*}{DAGER} & R-1 & 100 & 99.3 & 98.1 & 100 & 100 & 100 & 0 & 0 & 0 \\ 
			&  & R-2 & 100 & 99.0 & 96.5 & 100 & 100 & 100 & 0 & 0 & 0 \\
			\hline 
			\multirow{6}{*}{$B=4$} & \multirow{2}{*}{Ours} & R-1 & \textbf{100} & \textbf{100} & \textbf{100} & \textbf{99.92} & \textbf{100} & \textbf{100} & \textbf{100} & \textbf{100} & \textbf{100} \\ 
			&  & R-2 & \textbf{100} & \textbf{100} & \textbf{100} & \textbf{99.88} & \textbf{99.25} & \textbf{100} & \textbf{100} & \textbf{100} & \textbf{100}\\ \cline{2-12} 
			& \multirow{2}{*}{LAMP} & R-1 & 13.4 & 9.8 & 6.3 & 66.2 & 69.5 & 35.1 & / & / & / \\ 
			&  & R-2 & 3.9 & 2.7 & 0.9 & 21.8 & 32.5 & 4.2 & / & / & / \\ \cline{2-12} 
			& \multirow{2}{*}{DAGER} & R-1 & 94.0 & 95.6 & 66.8 & 100 & 57.45 & 29.56 & 0 & 0 & 0 \\ 
			&  & R-2 & 89.9 & 93.0 & 50.1 & 100 & 52.63 & 21.65 & 0 & 0 & 0 \\
			\hline
			\multirow{6}{*}{$B=8$} & \multirow{2}{*}{Ours} & R-1 & \textbf{100} &\textbf{100} & \textbf{100}& \textbf{99.92} & \textbf{100} & \textbf{100} & \textbf{100} & \textbf{100} & \textbf{100} \\ 
			&  & R-2 & \textbf{100} & \textbf{100} & \textbf{100} & \textbf{99.83} & \textbf{99.63} & \textbf{100} & \textbf{100} & \textbf{100} & \textbf{100} \\ \cline{2-12} 
			& \multirow{2}{*}{LAMP} & R-1 & 8.9 & 8.1 & 6.8 & 52.9 & 56.9 & 27.3 & / & / & / \\ 
			&  & R-2 & 1.9 & 0.7 & 0.3 & 13.1 & 19.1 & 2.0 & / & / & / \\ \cline{2-12} 
			& \multirow{2}{*}{DAGER} & R-1 & 67.8 & 74.1 & 37.1 & 88.18 & 18.51 & 20.37 & 0 & 0 & 0 \\ 
			&  & R-2 & 48.8 & 59.8 & 11.4 & 84.50 & 7.98 & 13.06 & 0 & 0 & 0 \\
			\hline
		\end{tabular}
	}
    \vspace{-0.4cm}
\end{table*}

\begin{table}[!t]
	\centering
	\caption{Main results on the GPT2-Large and Qwen-7B models with large batch sizes on various datasets.}\label{tab:performance2}
	\vspace{-5pt}
	\resizebox{0.45\textwidth}{!}{
		\begin{tabular}{lccccccccc}
			\toprule
			\multirow{2}{*}{} & \multirow{2}{*}{} & \multicolumn{2}{c}{$B=16$} & \multicolumn{2}{c}{$B=32$} & \multicolumn{2}{c}{$B=64$} & \multicolumn{2}{c}{$B=128$} \\
			\cmidrule(r){3-4} \cmidrule(r){5-6} \cmidrule(r){7-8} \cmidrule(r){9-10}
			&  & R-1 & R-2 & R-1 & R-2 & R-1 & R-2 & R-1 & R-2 \\
			\midrule
			\multirow{2}{*}{\textbf{CoLA}} & GPT2 & $\mathbf{100.0 }$ & $\mathbf{100.0 }$ & $\mathbf{100.0 }$ & $\mathbf{100.0 }$ & $\mathbf{100.0 }$ & $\mathbf{100.0 }$ & $\mathbf{100.0 }$ & $\mathbf{100.0 }$ \\
			& Qwen & $\mathbf{100.0 }$ & $\mathbf{100.0 }$ & $\mathbf{100.0 }$ & $\mathbf{100.0 }$ & $\mathbf{100.0 }$ & $\mathbf{100.0 }$ & $\mathbf{100.0 }$ & $\mathbf{100.0 }$ \\
			\cmidrule(r){1-10}
			\multirow{2}{*}{\textbf{SST-2}} & GPT2 & $\mathbf{100.0 }$ & $\mathbf{99.4 }$ & $\mathbf{100.0 }$ & $\mathbf{99.7 }$ & $\mathbf{0.0 }$ & $\mathbf{0.0 }$ & $\mathbf{0.0 }$ & $\mathbf{0.0 }$ \\
			& Qwen & $\mathbf{100.0 }$ & $\mathbf{100.0 }$ & $\mathbf{100.0 }$ & $\mathbf{100.0 }$ & $\mathbf{100.0 }$ & $\mathbf{100.0 }$ & $\mathbf{100.0 }$ & $\mathbf{100.0 }$ \\
			\cmidrule(r){1-10}
			\multirow{2}{*}{\textbf{RT}} & GPT2 & $\mathbf{100.0 }$ & $\mathbf{100.0 }$ & $\mathbf{100.0 }$ & $\mathbf{100.0 }$ & $\mathbf{0.0 }$ & $\mathbf{0.0 }$ & $\mathbf{0.0 }$ & $\mathbf{0.0 }$ \\
			& Qwen & $\mathbf{100.0 }$ & $\mathbf{100.0 }$ & $\mathbf{100.0 }$ & $\mathbf{100.0 }$ & $\mathbf{100.0 }$ & $\mathbf{100.0 }$ & $\mathbf{100.0 }$ & $\mathbf{100.0 }$ \\
			\bottomrule
	\end{tabular}}
	\vspace{-0.5cm}
\end{table}

%---------------------
\section{Experiments}
%---------------------

\subsection{Experimental Setup}
\para{Model and Dataset.} We will mainly carry out the main experiments on three datasets, i.e., CoLA~\cite{cola}, SST~\cite{sst}, and Rotten Tomatoes (RT)~\cite{rt}. We evaluate the attack performance of the proposed UTR attack on three models: Bert-Base \cite{bert}, GPT2-Large \cite{gpt}, and Qwen2.5-7B \cite{qwen2.5}. Detailed information about the models and datasets can be found in Appendix \ref{data}.  

\para{Training and Adapter Settings.} In the adapter-based FedLLMs, proper hyperparameter selection, i.e., $\mathit{reduction_{factor}}$, is a notable factor that determines the proposed attack performance. We investigate the performance of the proposed framework with different settings and try to find the best-performing threshold for it. In particular, we employ $\mathit{reduction_{factor}} \in \{1, 2, 4, 8\}$ to adjust the best threshold of the proposed framework. The structure of adapters is set as Fig.~\ref{arch}, and we set the hyperparameter $\mathit{reduction_{factor}}=2$ in the main experiments of our paper. The others will be presented in the \S\ref{sub:hy}. We insert the adapters between the model's layers rather than inside. During the experimentation, we utilized only the first two Transformer layers of the model. 

\para{Baselines.} We compare our attack with the following methods, including the optimization-based method (i.e., LAMP \cite{LAMP}) and the embedding-based method (i.e., DAGER \cite{dager}) designed for GIAs in FedLLMs. More details can be found in the Appendix~\ref{baselines}.

\para{Evaluation Metrics.} Following prior work, we evaluate the quality of text reconstruction using ROUGE-1/2 scores \cite{rouge}. ROUGE-1 and ROUGE-2 are widely adopted metrics for assessing the quality of text summarization or machine-generated text by comparing it to human-written reference summaries. ROUGE-N is computed as follows:
\begin{equation}
    ROUGE-N = \frac{\Sigma_i{Count_i(matched \ \ n-grams)}}{\Sigma_i{Count_i(n-grams \ \ in \ \ Reference_i)}}
\end{equation}

\subsection{Attack Performance Evaluation}
\label{I:utr}

In this section, we aim to evaluate the effectiveness of the proposed UTR attack against adapter-based FedLLMs. We compare UTR’s text reconstruction performance with baseline methods, i.e., LAMP and DAGER, across three benchmark datasets (CoLA, SST, and Rotten Tomatoes) and three representative model architectures (BERT, GPT2-large, and Qwen2.5-7B). An example of a UTR attack can be found in the Appendix~\ref{example}.

\para{Attack Performance under Different Data Volumes.} Specifically, we vary the batch size from 1 to 8 to evaluate performance under different data volumes. As shown in Table~\ref{tab:performance}, UTR achieves nearly perfect text reconstruction, with ROUGE-1/2 scores approaching 100\% across all models and datasets. In contrast, LAMP exhibits substantial performance degradation as batch size increases, while DAGER fails completely under the frozen backbone setting—highlighting UTR’s superior robustness and adaptability. This performance gap arises because LAMP formulates reconstruction as an optimization problem, often resulting in unstable convergence, whereas UTR directly exploits gradient information for deterministic reconstruction. DAGER’s failure, on the other hand, is attributed to its reliance on gradients from Transformer attention layers, which are typically frozen during fine-tuning. Note that as DAGER is not open-sourced, our implementation is based on its published description, which may lead to minor reproduction discrepancies.

\para{Attack Performance under Large Batch Size.} To evaluate scalability in practical large-batch scenarios, we evaluate the reconstruction capability of UTR on the GPT2-Large and Qwen2.5-7B models with substantially increased batch sizes. Specifically, we set the batch size to 16, 32, 64, and 128 on the CoLA, SST, and RT datasets. The results are shown in Table~\ref{tab:performance2}, where we discover that UTR maintains perfect or near-perfect reconstruction (ROUGE-1/2 $\approx$ 100) for Qwen and on CoLA for GPT2-Large, even at a batch size of 128. However, we find that the attack on GPT2-Large fails on the SST and RT datasets for batches larger than 32, indicating that the unidirectional attention mechanism in GPT models may pose a greater challenge for reconstructing longer sequences from large batches.

\subsection{Attack Performance Evaluation under Defenses}
This section empirically evaluates the defensive efficacy of Differential Privacy (DP) \cite{wei2020framework} and Gradient Pruning (GP) \cite{zhu2019deep}. We benchmark these defense methods against the UTR attack across model architectures, assessing their impact on both model utility and privacy protection. We implement a centralized DP framework where the server injects Gaussian noise into the aggregated adapter gradients from clients. The defense strength is primarily governed by the noise multiplier ($\sigma$), for which we explore a range of values $\sigma \in \{0.01, 1.5, 2.0, 3.0\}$ with a fixed clipping bound of $1.0$. As a heuristic defense, GP discards gradient elements with absolute values below a specified threshold, thereby perturbing the gradient signal. We evaluate pruning rates ($r$) $\in \{90\%, 95\%, 99.9\%\}$, where $r$ denotes the proportion of elements set to zero.

\begin{table}[!t]
	\centering
	\caption{Effectiveness of defenses against our attack.}\label{tab:performance_defence}
	\vspace{-6pt}
	\resizebox{0.45\textwidth}{!}{
		\begin{tabular}{cccccccc}
			\toprule
			\multirow{2}{*}{\textbf{Defense}} & \multirow{2}{*}{\textbf{Parameter}} & \multicolumn{2}{c}{\textbf{BERT}} & \multicolumn{2}{c}{\textbf{GPT2}} & \multicolumn{2}{c}{\textbf{Qwen}} \\
			\cmidrule(r){3-4} \cmidrule(r){5-6} \cmidrule(r){7-8} 
			&  & R-1 & R-2 & R-1 & R-2 & R-1 & R-2  \\
			\midrule
			\multirow{3}{*}{\textbf{DP}} 
			&$\sigma = 0.01 $ & $\sim 100\%$ & $\sim 100\%$ & $\sim 100\%$ & $\sim 100\%$ & $\sim 100\%$ & $\sim 100\%$  \\
			& $\sigma = 1.5$ & $\sim 15\%$ & $\sim 15\%$ & $\sim 13\%$ & $\sim 17\%$ & $\sim 0\%$ & $\sim 0\%$  \\
			& $\sigma = 2.0$ & $< 2\%$ & $< 2\%$ & $< 2\%$ & $< 2\%$ & $\sim 0\%$ & $\sim 0\%$  \\
			& $\sigma = 3.0$ & $\sim 0\%$ & $\sim 0\%$ & $\sim 0\%$ & $\sim 0\%$  & $\sim 0\%$ & $\sim 0\%$   \\
			\cmidrule(r){1-8}
			\multirow{3}{*}{\textbf{GP}} & $r = 90\%$ & $\sim 65\%$ & $\sim 65\%$ & $\sim 65\%$ & $\sim 65\%$ & $\sim 60\%$ & $\sim 60\%$  \\
			& $r = 99\%$ & $\sim 25\%$ & $\sim 25\%$ & $\sim 25\%$ & $\sim 25\%$ & $\sim 26\%$ & $\sim 25\%$ \\
			& $r = 99.9\%$ & $< 5\%$ & $< 5\%$ & $< 5\%$ & $< 5\%$ & $< 5\%$ & $< 5\%$ \\
			\bottomrule
	\end{tabular}}
	\vspace{-0.5cm}
\end{table}

\para{Attack Performance under DP.} As shown in Table~\ref{tab:performance_defence}, the UTR attack achieved a 100\% attack success rate at a noise level ($\sigma=0.01$) that does not destroy the effectiveness of the model. At $\sigma=1.0$, the defense became nearly impenetrable, with success rates dropping below 2\% and reconstructed text devolving into nonsensical word sequences (e.g., ``Aesthetic volume conference bicycle''). A setting of $\sigma=1.5$ rendered the attack completely ineffective, achieving a 0\% success rate. This demonstrates that DP provides a mathematically grounded and definitive defense against gradient inversion attacks, with its strength precisely controllable via the $\sigma$ parameter. \textbf{However, we demonstrate in the Appendix~\ref{A1} that such DP defense strength renders the model unusable.}

\begin{figure*}[!t]
	\centering
	\subcaptionbox{BERT \label{ass:a}}[0.32\textwidth]{\includegraphics[width=0.32\textwidth]{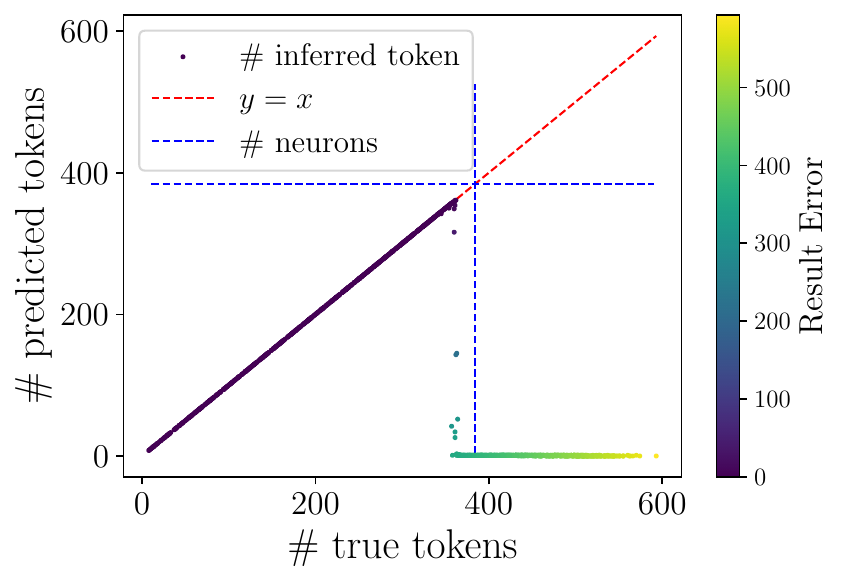}} %
	\hfill
	\subcaptionbox{GPT-Large \label{ass:b}}[0.33\textwidth]{\includegraphics[width=0.33\textwidth]{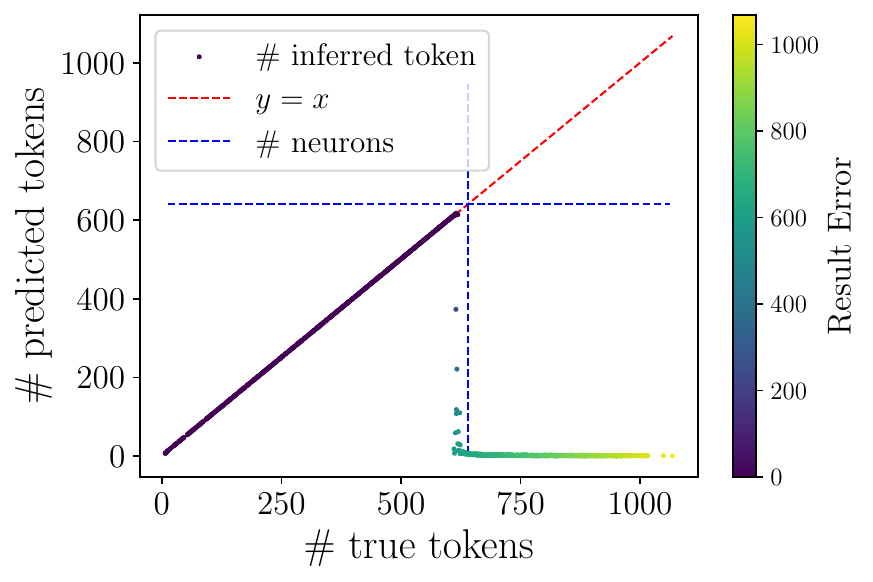}} %
	\hfill
	\subcaptionbox{Qwen-7B \label{ass:c}}[0.33\textwidth]{\includegraphics[width=0.33\textwidth]{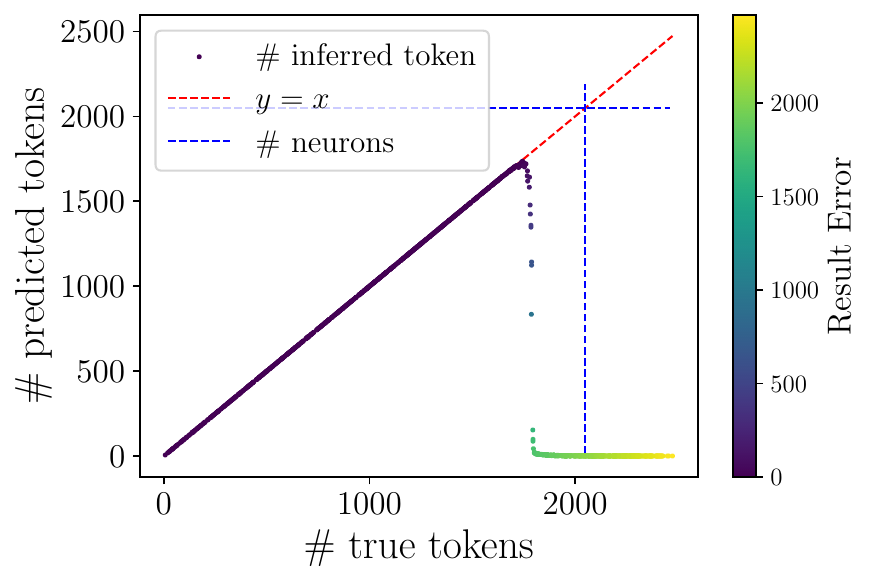}}
	\hfill
	\caption{Relationship between the number of recovered tokens in the gradient information of the embedding adapter and the number of true tokens.}\label{assess}
%	\vspace{-0.5cm}
\end{figure*}

\begin{figure}[!t]
	\centering
	\subcaptionbox{$reduction_{factor}$     \label{fig:placeholder}}[0.22\textwidth]{\includegraphics[width=0.22\textwidth]{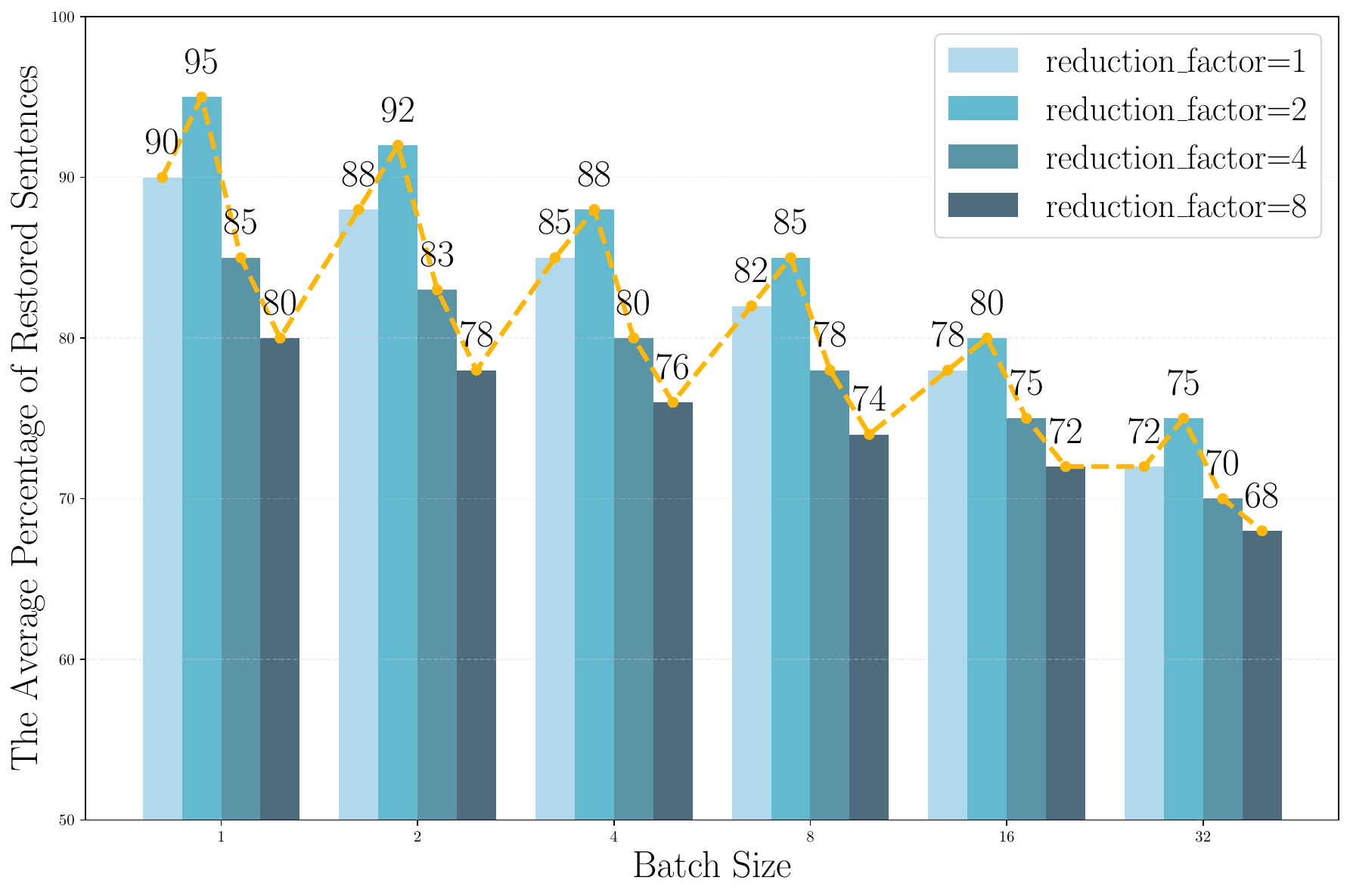}} %
	\subcaptionbox{$atol$  \label{fig:placeholder2}}[0.22\textwidth]{\includegraphics[width=0.22\textwidth]{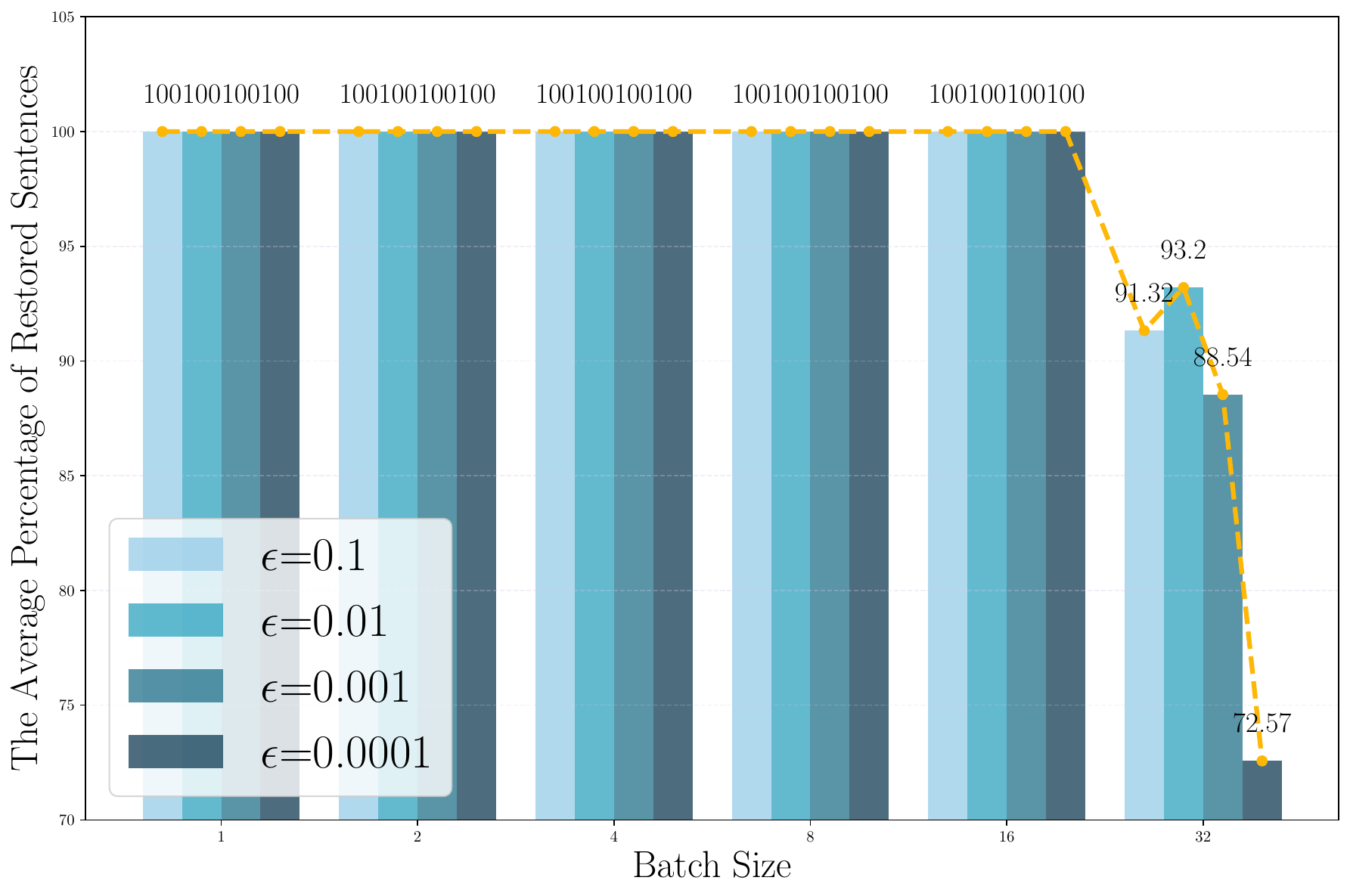}} %
	\caption{The performance across various $reduction_{factor}$ and $atol$ settings.}\label{asse}
	\vspace{-0.5cm}
\end{figure}

\para{Attack Performance under GP.} In contrast, GP offers fragile and unreliable protection. Under a mild pruning regime ($r = 90\%$), the attack remained highly viable with a success rate of ~65\%, as the retained top gradients contained sufficient information for reconstruction. Aggressive pruning ($r = 99\%$) reduced the success rate to approximately 25\%, yet this still permitted significant leakage of key tokens and semantic concepts. It was only at an extreme, utility-destroying threshold ($r = 99.9\%$) that the attack was effectively thwarted (success rate <5\%). This fundamental limitation underscores that GP's protection is heuristic and brittle, failing to securely safeguard data without simultaneously negating the model's learning capacity.

\subsection{Impact of the Adapter Settings}
\label{sub:hy}
In this section, we further investigate the impact of the adapter's bottleneck size and the span similarity threshold, controlled by the $reduction_{factor} \in \{1, 2, 4, 8\}$, and the $\epsilon \in \{0.1,0.01,0.001,0.0001\}$, on the success of our reconstruction attacks. 
 
\para{Impact of $reduction_{factor}$.}
The results, illustrated in Fig.~\ref{fig:placeholder}, indicate a strong inverse correlation between the $reduction_{factor}$ and attack performance. As the $reduction_{factor}$ increases, the dimensionality of the adapter's hidden representation decreases, creating a more severe information bottleneck. This bottleneck directly degrades the quality and completeness of the gradient information available for inversion. Consequently, while a larger $reduction_{factor}$ offers improved parameter efficiency for federated tuning, it also inherently enhances privacy by attenuating the signal that can be exploited by gradient-based attacks.

\para{Impact of Span Similarity Threshold $\epsilon$.}
The experimental results, illustrated in Fig.~\ref{fig:placeholder2} reveal a critical trade-off between the interaction of batch size and solver tolerance (Atol). The key finding is that a higher tolerance value allows the attack to maintain a perfect or near-perfect success rate across a wide range of batch sizes, demonstrating that efficient large-batch processing can be achieved without performance loss. Conversely, when a very strict tolerance is enforced, the attack becomes highly fragile; its performance remains perfect for small batches but collapses dramatically as the batch size increases. This indicates that striving for high numerical precision undermines the attack's stability when scaling computations.

\subsection{Effectiveness Analysis of UTR Attack}
\label{s7}
This section evaluates the fundamental capability of adapter gradients to capture meaningful information, which directly dictates the upper bound of text reconstruction, as described in \S\ref{TRAA}. To validate this, we conducted a large-scale experiment on the CoLA dataset, correlating the number of tokens reconstructed via the embedding adapter's gradients against the true number of tokens in the original text. We varied the batch size from 1 to 256 to ensure a wide distribution of token counts, performing five rounds for a total of 1,280 trials. 

The results, plotted in Fig.~\ref{assess}, show the relationship between the inferred and true token counts. Two key patterns emerge from the data. First, the effective information captured by the gradients is consistently slightly less than the total number of downsampling layer neurons. This aligns with our theoretical analysis in \S\ref{TRAA}. Second, the amount of information effectively captured is positively correlated with the size of the adapter's downsampling layer. A comparison of the subfigures in Fig.~\ref{assess} confirms that the adapter in Qwen, with the most neurons, captures the most information. This pattern directly explains the superior attack performance of Qwen noted in prior sections.

%---------------------
\section{Conclusion}
%---------------------

This paper investigates the privacy risks of adapter-based FedLLMs under gradient inversion attacks (GIAs). We propose a unified framework for analyzing GIAs and uncover that adapter structures create new privacy vulnerabilities. To exploit these, we introduce UTR for data reconstruction by leveraging discrete LLM properties (e.g., positional embeddings, self-attention). Finally, we quantify information leakage from single gradient updates, offering insights into the privacy–performance trade-off. Our work advances understanding and defense of privacy in adapter-based FedLLMs.

\begin{acks}
This work was supported by the National Natural Science Foundation of China (Grant No. 62472431) and the Innovation Research Foundation of the National University of Defense Technology (Grant No. ZK25-17).
\end{acks}

%\clearpage

\bibliographystyle{ACM-Reference-Format}
\balance
\bibliography{sample-base}

@String{Springer = "Springer-Verlag" }

@inproceedings{wang2024content,
	title={Content-based collaborative generation for recommender systems},
	author={Wang, Yidan and Ren, Zhaochun and Sun, Weiwei and Yang, Jiyuan and Liang, Zhixiang and Chen, Xin and Xie, Ruobing and Yan, Su and Zhang, Xu and Ren, Pengjie and others},
	booktitle={Proc. of CIKM},
	year={2024}
}

@article{voigt2017eu,
	title={The eu general data protection regulation (gdpr)},
	author={Voigt, Paul and Von dem Bussche, Axel},
	journal={A practical guide, 1st ed., Cham: Springer International Publishing},
	volume={10},
	number={3152676},
	pages={10--5555},
	year={2017},
	publisher={Springer}
}

@inproceedings{hu2022lora,
	title={Lora: Low-rank adaptation of large language models.},
	author={Hu, Edward J and Shen, Yelong and Wallis, Phillip and Allen-Zhu, Zeyuan and Li, Yuanzhi and Wang, Shean and Wang, Lu and Chen, Weizhu and others},
	booktitle={Proc. of ICLR},
	year={2022}
}

@inproceedings{cai2023efficient,
	title={Efficient federated learning for modern nlp},
	author={Cai, Dongqi and Wu, Yaozong and Wang, Shangguang and Lin, Felix Xiaozhu and Xu, Mengwei},
	booktitle={Proc. of Mobicom},
	year={2023}
}

@inproceedings{bai2024federated,
	title={Federated fine-tuning of large language models under heterogeneous tasks and client resources},
	author={Bai, Jiamu and Chen, Daoyuan and Qian, Bingchen and Yao, Liuyi and Li, Yaliang},
	booktitle={Proc. of NeurIPS},
	year={2024}
}

@inproceedings{wang2024flora,
	title={Flora: Federated fine-tuning large language models with heterogeneous low-rank adaptations},
	author={Wang, Ziyao and Shen, Zheyu and He, Yexiao and Sun, Guoheng and Wang, Hongyi and Lyu, Lingjuan and Li, Ang},
	booktitle={Proc. of NeurIPS},
	year={2024}
}

@inproceedings{long2024dual,
	title={Dual-personalizing adapter for federated foundation models},
	author={Long, Guodong and Shen, Tao and Jiang, Jing and Blumenstein, Michael and others},
	booktitle={Proc. of NeurIPS},
	year={2024}
}

@inproceedings{liu2022right,
	title={The right to be forgotten in federated learning: An efficient realization with rapid retraining},
	author={Liu, Yi and Xu, Lei and Yuan, Xingliang and Wang, Cong and Li, Bo},
	booktitle={Proc. of INFOCOM},
	year={2022}
}

@inproceedings{kuang2024federatedscope,
	title={Federatedscope-llm: A comprehensive package for fine-tuning large language models in federated learning},
	author={Kuang, Weirui and Qian, Bingchen and Li, Zitao and Chen, Daoyuan and Gao, Dawei and Pan, Xuchen and Xie, Yuexiang and Li, Yaliang and Ding, Bolin and Zhou, Jingren},
	booktitle={Proc. of KDD},
	year={2024}
}

@inproceedings{ye2024openfedllm,
	title={Openfedllm: Training large language models on decentralized private data via federated learning},
	author={Ye, Rui and Wang, Wenhao and Chai, Jingyi and Li, Dihan and Li, Zexi and Xu, Yinda and Du, Yaxin and Wang, Yanfeng and Chen, Siheng},
	booktitle={Proc. of KDD},
	year={2024}
}

@article{bonta2022california,
	title={California consumer privacy act (CCPA)},
	author={Bonta, Rob},
	journal={Retrieved from State of California Department of Justice: https://oag. ca. gov/privacy/ccpa},
	year={2022}
}

@inproceedings{dager,
	title={{DAGER}: Exact Gradient Inversion for Large Language Models},
	author={Ivo Petrov and Dimitar Iliev Dimitrov and Maximilian Baader and Mark Niklas Mueller and Martin Vechev},
	booktitle={Proc. of NeurIPS},
	year={2024}
}

@inproceedings{recovering,
	title={Recovering private text in federated learning of language models},
	author={Gupta, Samyak and Huang, Yangsibo and Zhong, Zexuan and Gao, Tianyu and Li, Kai and Chen, Danqi},
	booktitle={Proc. of NeurIPS},
	year={2022}
}

@article{cola,
	title={Neural network acceptability judgments},
	author={Warstadt, Alex and Singh, Amanpreet and Bowman, Samuel R},
	journal={Transactions of the Association for Computational Linguistics},
	volume={7},
	pages={625--641},
	year={2019},
	publisher={MIT Press One Rogers Street, Cambridge, MA 02142-1209}
}

@inproceedings{sst,
	title={Recursive deep models for semantic compositionality over a sentiment treebank},
	author={Socher, Richard and Perelygin, Alex and Wu, Jean and Chuang, Jason and Manning, Christopher D and Ng, Andrew Y and Potts, Christopher},
	booktitle={Proc. of EMNLP},
	year={2013}
}

@inproceedings{rt,
	title={Seeing Stars: Exploiting Class Relationships for Sentiment Categorization with Respect to Rating Scales},
	author={Pang, Bo and Lee, Lillian},
	booktitle={Proc. of ACL},
	year={2005}
}

@inproceedings{gpt,
	title={Language Models are Unsupervised Multitask Learners},
	author={Radford, Alec and Wu, Jeff and Child, Rewon and Luan, David and Amodei, Dario and Sutskever, Ilya and Dean, Jeffrey and Ghemawat, Sanjay},
	booktitle={Proc. of OSDI},
	year={2019}
}

@inproceedings{bert,
	title={Bert: Pre-training of deep bidirectional transformers for language understanding},
	author={Kenton, Jacob Devlin Ming-Wei Chang and Toutanova, Lee Kristina},
	booktitle={Proc. of NAACL-HLT},
	year={2019}
}

@inproceedings{FedLLMs_ijcai2024p919,
	author       = {Herbert Woisetschl{\"{a}}ger and
	Alexander Erben and
	Shiqiang Wang and
	Ruben Mayer and
	Hans{-}Arno Jacobsen},
	title        = {A Survey on Efficient Federated Learning Methods for Foundation Model
	Training},
	booktitle    = {Proc. of IJCAI},

	year         = {2024}
}

@inproceedings{FedLLMs_FedDAT,
	author       = {Haokun Chen and
	Yao Zhang and
	Denis Krompass and
	Jindong Gu and
	Volker Tresp},
	title        = {FedDAT: An Approach for Foundation Model Finetuning in Multi-Modal
	Heterogeneous Federated Learning},
	booktitle    = {Proc. of AAAI},
	year         = {2024}
}

@inproceedings{Weather_Forecasting,
	author       = {Shengchao Chen and
	Guodong Long and
	Tao Shen and
	Jing Jiang},
	title        = {Prompt Federated Learning for Weather Forecasting: Toward Foundation
	Models on Meteorological Data},
	booktitle    = {Proc. of IJCAI},
	year         = {2023}
}

@article{biomedical_NLP,
	author       = {Le Peng and
	Gaoxiang Luo and
	Sicheng Zhou and
	Jiandong Chen and
	Ziyue Xu and
	Ju Sun and
	Rui Zhang},
	title        = {An in-depth evaluation of federated learning on biomedical natural
	language processing for information extraction},
	journal      = {NPJ Digit. Medicine},
	volume       = {7},
	number       = {1},
	year         = {2024}
}

@inproceedings{wu2024cardinality,
	title={Cardinality counting in" alcatraz": A privacy-aware federated learning approach},
	author={Wu, Nan and Yuan, Xin and Wang, Shuo and Hu, Hongsheng and Xue, Minhui},
	booktitle={Proc. of WWW},
	year={2024}
}

@inproceedings{guo2023pfedprompt,
	title={Pfedprompt: Learning personalized prompt for vision-language models in federated learning},
	author={Guo, Tao and Guo, Song and Wang, Junxiao},
	booktitle={Proc. of WWW},
	year={2023}
}

@inproceedings{TAG,
	author       = {Jieren Deng and
	Yijue Wang and
	Ji Li and
	Chenghong Wang and
	Chao Shang and
	Hang Liu and
	Sanguthevar Rajasekaran and
	Caiwen Ding},
	title        = {{TAG:} Gradient Attack on Transformer-based Language Models},
	booktitle    = {Proc. of Findings of EMNLP},
	year         = {2021}
}

@inproceedings{LAMP,
	author       = {Mislav Balunovic and
	Dimitar I. Dimitrov and
	Nikola Jovanovic and
	Martin T. Vechev},
	title        = {{LAMP:} Extracting Text from Gradients with Language Model Priors},
	booktitle    = {Proc. of NeurIPS},
	year         = {2022}
}

@inproceedings{adapter_structure,
	author       = {Neil Houlsby and
	Andrei Giurgiu and
	Stanislaw Jastrzebski and
	Bruna Morrone and
	Quentin de Laroussilhe and
	Andrea Gesmundo and
	Mona Attariyan and
	Sylvain Gelly},
	title        = {Parameter-Efficient Transfer Learning for {NLP}},
	booktitle    = {Proc. of ICML},
	year         = {2019}
}

@inproceedings{MAD-X,
	author       = {Jonas Pfeiffer and
	Ivan Vulic and
	Iryna Gurevych and
	Sebastian Ruder},
	title        = {{MAD-X:} An Adapter-Based Framework for Multi-Task Cross-Lingual Transfer},
	booktitle    = {Proc. of EMNLP},
	year         = {2020}
}

@inproceedings{xue2024revisiting,
  title={Revisiting gradient pruning: A dual realization for defending against gradient attacks},
  author={Xue, Lulu and Hu, Shengshan and Zhao, Ruizhi and Zhang, Leo Yu and Hu, Shengqing and Sun, Lichao and Yao, Dezhong},
  booktitle={Proc. of AAAI},
  year={2024}
}

@inproceedings{dwork2006differential,
  title={Differential privacy},
  author={Dwork, Cynthia},
  booktitle={International colloquium on automata, languages, and programming},
  pages={1--12},
  year={2006},
  organization={Springer}
}

@inproceedings{FL,
	author       = {Brendan McMahan and
	Eider Moore and
	Daniel Ramage and
	Seth Hampson and
	Blaise Ag{\"{u}}era y Arcas},
	title        = {Communication-Efficient Learning of Deep Networks from Decentralized Data},
	booktitle    = {Proc. of AISTATS},
	year         = {2017}
}

@inproceedings{rouge,
	author = {Lin, Chin-Yew},
	title = {ROUGE: a Package for Automatic Evaluation of Summaries},
	booktitle = {Proc. of Workshop of ACL},
	year = {2004}
}

@inproceedings{Bitfit,
	author       = {Elad Ben Zaken and
	Yoav Goldberg and
	Shauli Ravfogel},
	title        = {BitFit: Simple Parameter-efficient Fine-tuning for Transformer-based
	Masked Language-models},
	booktitle    = {Proc. of ACL},
	year         = {2022}
}

@inproceedings{LoRA,
	author       = {Yuhui Xu and
	Lingxi Xie and
	Xiaotao Gu and
	Xin Chen and
	Heng Chang and
	Hengheng Zhang and
	Zhengsu Chen and
	Xiaopeng Zhang and
	Qi Tian},
	title        = {QA-LoRA: Quantization-Aware Low-Rank Adaptation of Large Language
	Models},
	booktitle    = {Proc. of ICLR},
	year         = {2024}
}

@inproceedings{Parameter-Efficient_Prompt_Tuning,
	author       = {Tianshi Che and
	Ji Liu and
	Yang Zhou and
	Jiaxiang Ren and
	Jiwen Zhou and
	Victor S. Sheng and
	Huaiyu Dai and
	Dejing Dou},
	title        = {Federated Learning of Large Language Models with Parameter-Efficient
	Prompt Tuning and Adaptive Optimization},
	booktitle    = {Proc. of EMNLP},
	year         = {2023},
}

@inproceedings{P-tuningV2,
	title={P-Tuning: Prompt Tuning Can Be Comparable to Fine-tuning Across Scales and Tasks},
	author={Liu, Xiao and Ji, Kaixuan and Fu, Yicheng and Tam, Weng and Du, Zhengxiao and Yang, Zhilin and Tang, Jie},
	booktitle={Proc. of ACL},
	year={2022}
}

@inproceedings{Gradient_Inversion_Attack,
	author       = {Rui Zhang and
	Song Guo and
	Junxiao Wang and
	Xin Xie and
	Dacheng Tao},
	title        = {A Survey on Gradient Inversion: Attacks, Defenses and Future Directions},
	booktitle    = {Proc. of IJCAI},
	year         = {2022}
}

@inproceedings{data_reconstruction_attack,
	author       = {Joshua C. Zhao and
	Atul Sharma and
	Ahmed Roushdy Elkordy and
	Yahya H. Ezzeldin and
	Salman Avestimehr and
	Saurabh Bagchi},
	title        = {Loki: Large-scale Data Reconstruction Attack against Federated Learning
	through Model Manipulation},
	booktitle    = {Proc. of SP},
	year         = {2024}
}

@inproceedings{MAD-X_1,
	title={No Train but Gain: Language Arithmetic for training-free Language Adapters enhancement},
	author={Klimaszewski, Mateusz and Andruszkiewicz, Piotr and Birch, Alexandra},
	booktitle={Proc. of ACL},
	year={2025}
}

@inproceedings{MAD-X_2,
	author       = {Marinela Parovic and
	Alan Ansell and
	Ivan Vulic and
	Anna Korhonen},
	title        = {Cross-Lingual Transfer with Target Language-Ready Task Adapters},
	booktitle    = {Proc. of Findings of ACL},
	year         = {2023}
}

@inproceedings{AdapterFusion,
	author       = {Jonas Pfeiffer and
	Aishwarya Kamath and
	Andreas R{\"{u}}ckl{\'{e}} and
	Kyunghyun Cho and
	Iryna Gurevych},
	title        = {AdapterFusion: Non-Destructive Task Composition for Transfer Learning},
	booktitle    = {Proc. of EACL},
	year         = {2021}
}

@article{qwen2.5,
	title   = {Qwen2.5 Technical Report}, 
	author  = {An Yang and Baosong Yang and Beichen Zhang and Binyuan Hui and et al},
	journal = {arXiv preprint arXiv:2412.15115},
	year    = {2024}
}

@inproceedings{Adapter_efficient1,
	author       = {Mengwei Xu and
	Dongqi Cai and
	Yaozong Wu and
	Xiang Li and
	Shangguang Wang},
	title        = {FwdLLM: Efficient Federated Finetuning of Large Language Models with
	Perturbed Inferences},
	booktitle    = {Proc. of USENIX ATC},
	year         = {2024}
}

@inproceedings{hugging_face_adapter,
	title = "Adapters: A Unified Library for Parameter-Efficient and Modular Transfer Learning",
	author = {Poth, Clifton  and
	Sterz, Hannah  and
	Paul, Indraneil  and
	Purkayastha, Sukannya  and
	Engl{\"a}nder, Leon  and
	Imhof, Timo  and
	Vuli{\'c}, Ivan  and
	Ruder, Sebastian  and
	Gurevych, Iryna  and
	Pfeiffer, Jonas},
	booktitle = "Proc. of EMNLP",
	year = "2023"
}

@inproceedings{cai2025large,
  title={Large language models empowered personalized web agents},
  author={Cai, Hongru and Li, Yongqi and Wang, Wenjie and Zhu, Fengbin and Shen, Xiaoyu and Li, Wenjie and Chua, Tat-Seng},
  booktitle={Proc. of WWW},
  year={2025}
}

@inproceedings{liu2020towards,
  title={Towards building an intelligent chatbot for customer service: Learning to respond at the appropriate time},
  author={Liu, Che and Jiang, Junfeng and Xiong, Chao and Yang, Yi and Ye, Jieping},
  booktitle={Proc. of KDD},
  year={2020}
}

@article{wei2020framework,
  title={A framework for evaluating gradient leakage attacks in federated learning},
  author={Wei, Wenqi and Liu, Ling and Loper, Margaret and Chow, Ka-Ho and Gursoy, Mehmet Emre and Truex, Stacey and Wu, Yanzhao},
  journal={arXiv preprint arXiv:2004.10397},
  year={2020}
}

@inproceedings{zhu2019deep,
  title={Deep leakage from gradients},
  author={Zhu, Ligeng and Liu, Zhijian and Han, Song},
  booktitle={NeurIPS},
  year={2019}
}

@inproceedings{liu2025fedmobile,
  title={FedMobile: Enabling Knowledge Contribution-aware Multi-modal Federated Learning with Incomplete Modalities},
  author={Liu, Yi and Wang, Cong and Yuan, Xingliang},
  booktitle={Proc. of WWW},
  year={2025}
}

@inproceedings{adapter_Heterogeneous,
	author       = {Haokun Chen and
	Yao Zhang and
	Denis Krompass and
	Jindong Gu and
	Volker Tresp},
	editor       = {Michael J. Wooldridge and
	Jennifer G. Dy and
	Sriraam Natarajan},
	title        = {FedDAT: An Approach for Foundation Model Finetuning in Multi-Modal
	Heterogeneous Federated Learning},
	booktitle    = {Proc. of AAAI},
	year         = {2024}
}

\clearpage
%%
%% If your work has an appendix, this is the place to put it.
\appendix

\section{Proofs}\label{proof}
The representational capacity of the attack is fundamentally constrained by the rank of the subspace $S$. We then prove Theorems \ref{th-span-dim} and \ref{th-1} as follows:

\begin{proof}
Let $W = {\mathbf{w}_1, \mathbf{w}_2, \dots, \mathbf{w}_m}$ be a spanning set for $S$. If $W$ is linearly independent, then $\dim(S) = m$. If $W$ is linearly dependent, then there exists a proper subset of $W$ that forms a basis for $S$, and thus, $\dim(S) < m$. Therefore, in all cases, $\dim(S) \leq m$.
\end{proof}

% \begin{theorem}
% (Rank-Capacity.) The maximum number of linearly independent hidden embedding vectors $v_i$ that can be perfectly represented in the subspace $S$ is at most $rank(S)$. Furthermore, $rank(S) \le min(d_{bottleneck}, n, d_{hidden})$. 
% \end{theorem}

\begin{proof}
We prove the theorem by establishing the bounds on the rank of $S$ through the following arguments:

\para{Step 1:} By definition, the subspace $S$ is spanned by the $d_{\text{bottleneck}}$ vectors ${RWBG}_{j}$. By Theorem \ref{th-span-dim}, the dimension of a subspace cannot exceed the number of vectors in its spanning set, therefore we have:
\begin{equation}
rank(S)\le d_{bottleneck}.
\end{equation}
\para{Step 2:} Each $RWBG_j$ is a linear combination of the $n$ input hidden embedding vectors ${v_i}^n$. Therefore, $S$ is a subspace of the span of ${v_i}^n$, which has dimension at most $\min(n, d_{hidden})$. Thus:
\begin{equation}
rank(S) \le min(n,d_{hidden})
\end{equation}
\para{Step 3:} Combining the bounds from Step 1 and Step 2, the most restrictive bound is the minimum of these three values. In practical FedLLM scenarios, $d_{bottleneck}$ is typically much smaller than both n and $d_{hidden}$ (e.g., $d_{bottleneck}=512$, $d_{hidden}=4096$, $n$ can be in the thousands for a large batch). Therefore, the bottleneck dimension is the effective upper limit: 
\begin{equation}
rank(S) \le min(d_{bottleneck}, n, d_{hidden})
\end{equation}

\noindent Finally, the maximum number of linearly independent vectors that can be perfectly represented in $S$ is precisely $\text{rank}(S)$, by the definition of dimension. This completes the proof.
\end{proof}

We analyze UTR attacks within an ideal setting to delineate the theoretical limits of their efficacy. By characterizing this upper bound, we provide the necessary analytical groundwork to inform the design of provably resilient countermeasures.

\section{Algorithm}\label{algo}
The pseudo code of the detailed execution steps of the UTR attack is shown in the following algorithm. The algorithm constructs candidate sentences by iteratively combining tokens from $W_b$. To manage the exponential complexity of this search space (C3), it employs a set of Boolean filter functions after each token extension: \texttt{EICW()} (Consecutive Identical Word Detection): Prunes candidates containing repeated consecutive tokens, which are uncommon in valid text. \texttt{CG()} (Grammar Check): Filters out sequences that exhibit obvious syntactic errors. It is evaluated using \textit{language\_tool\_python} with over 50 syntactic rules. \texttt{CS()} (Semantic Check): Discards combinations that lack basic semantic coherence. We measure it via cosine similarity > 0.2 in SBERT embedding space. These filters collectively guide the search towards linguistically plausible sequences, drastically reducing the number of candidates that require computationally expensive verification.

\begin{algorithm}[!t]
	\caption{UTR Attack}
	\label{alg:atk}
	\begin{algorithmic}[1]
		\Require 
		\parbox[t]{0.85\linewidth}{
			Embedding-Adapter gradients $G_{EA}$, \\Layer-Adapter gradients $G_{LA}$, \\vocabulary $\mathcal{W}$,\\ projection thresholds $\varepsilon_{EA}$ and $\varepsilon_{LA}$, \\maximum reconstruction length $L_{\max}$,\\ and constrained beam width $B$
		}
		\Ensure Reconstructed sentence candidates $\mathcal{S}_{\text{cand}}$
		
		\Statex \textbf{Preprocessing:}
		\State Compute subspace bases: $S_{EA} \gets \text{span}(G_{EA})$, $S_{LA} \gets \text{span}(G_{LA})$
		\State Define $\mathrm{in\_span}(v, S, \varepsilon)$ as the predicate \\
		$\|v - P_S(v)\|_2 < \varepsilon$, where $P_S(v)$ is the orthogonal projection of $v$ onto $S$
		
		\Statex \textbf{Step 1: Infer Candidate Word Bag}
		\State Initialize candidate word bag: $\mathcal{W}_b \gets \emptyset$
		\ForAll{$w \in \mathcal{W}$}
		\State $v_w \gets f_0(w)$ \Comment{Word-level representation of $w$}
		\If{$\mathrm{in\_span}(v_w, S_{EA}, \varepsilon_{EA})$}
		\State $\mathcal{W}_b \gets \mathcal{W}_b \cup \{w\}$
		\EndIf
		\EndFor
		\If{$|\mathcal{W}_b| = 0$}
		\State \Return $\emptyset$ \Comment{No recoverable words found}
		\EndIf
		
		\Statex \textbf{Step 2: Sentence Reconstruction via Beam Search}
		\State Initialize beam: $\mathcal{B} \gets \{(\text{seq} = [\,],\; \text{score} = 0)\}$
		\For{$l = 1$ to $L_{\max}$}
		\State $\mathcal{B}_{\text{new}} \gets \emptyset$
		\ForAll{$(\text{seq},\, \text{score}) \in \mathcal{B}$}
		\ForAll{$w \in \mathcal{W}_b$}
		\If{\textsf{EICW}(\text{seq}, $w$)} \Comment{Exclude if $w$ already in \text{seq}}
		\State \textbf{continue}
		\EndIf
		\State $\text{seq}' \gets \text{seq} \oplus [w]$ \Comment{Append $w$ to sequence}
		\If{\textsf{CG\_score}(\text{seq}') \textbf{or} \textsf{CS\_score}(\text{seq}')} 
		\State \textbf{continue} \Comment{Skip if grammatically implausible or semantically incoherent}
		\EndIf
		\State $v_{\text{seq}'} \gets f_1(\text{seq}')$ \Comment{Sequence-level representation}
		\State $p_s \gets \mathrm{span\_similarity}(v_{\text{seq}'}, S_{LA})$
		\If{$\mathrm{in\_span}(v_{\text{seq}'}, S_{LA}, \varepsilon_{LA})$}
		\State Add $(\text{seq}',\, p_s)$ to $\mathcal{B}_{\text{new}}$
		\EndIf
		\EndFor
		\EndFor
		\If{$\mathcal{B}_{\text{new}} = \emptyset$}
		\State \textbf{break} \Comment{No valid extensions; terminate early}
		\EndIf
		\State \textbf{Prune:} Keep top-$B$ sequences in $\mathcal{B}_{\text{new}}$ with highest $p_s$
		\State $\mathcal{B} \gets \mathcal{B}_{\text{new}}$
		\If{convergence or early-stopping criterion is met}
		\State \textbf{break}
		\EndIf
		\EndFor
		
		\Statex \textbf{Step 3: Return Final Candidates}
		\State Sort $\mathcal{B}$ by descending $p_s$ to obtain $\mathcal{S}_{\text{cand}}$
		\State \Return $\mathcal{S}_{\text{cand}}$
	\end{algorithmic}
\end{algorithm}

\section{Models and Datasets}\label{data}
Table~\ref{tab:model} provides key information about these models, including whether their embedding layers incorporate positional encoding mechanisms, whether they utilize unidirectional or bidirectional attention mechanisms, and their hidden dimension sizes. These features are identified as critical because they influence the implementation of the attack and impact the effectiveness of the attack. A detailed analysis of these factors will be conducted during the experimental process.

\begin{table}[!t]
	\centering
	\caption{Basic information of models. PE means whether the model has position embedding.}
	\label{tab:model}
%	\vspace{-6pt}
	\begin{tabularx}{0.45\textwidth}{cccc}
		\toprule
		\textbf{Model} & \textbf{PE} & \textbf{Self/Bidirect} & \textbf{Hidden Embedding}\\
		\midrule
		BERT-Base & \checkmark & Bidirect & 786\\
		\midrule
		GPT2-Large & \checkmark & Self & 1280\\
		\midrule
		Qwen2.5-7B & $\times$ & Bidirect & 4096\\
		\bottomrule
	\end{tabularx}
	%	}
	\vspace{-0.3cm}
\end{table}

This work leverages three benchmark datasets, i.e., the Corpus of Linguistic Acceptability (CoLA), the Stanford Sentiment Treebank (SST-2), and the Rotten Tomatoes (RT) movie review dataset, each probing a distinct facet of this capability. The numerical information of the above dataset can be found in Table \ref{tab:dataset}.

\para{CoLA Dataset.} CoLA is a specialized dataset designed to evaluate a model's fundamental grasp of English syntax and grammatical structure. Unlike tasks focused on meaning or sentiment, CoLA frames language understanding as a binary classification problem of grammaticality judgment, where a model must determine whether a given sentence is linguistically acceptable or unacceptable. The sentences are drawn from expert linguistics publications and annotated by trained linguists, ensuring high-quality labels for complex grammatical phenomena. This includes testing for subtle violations in subject-verb agreement, verb argument structure, and anomalous word sequences. As a core component of the GLUE benchmark, CoLA provides a crucial test of a model's syntactic competence, probing its ability to internalize the formal rules of the language rather than just statistical patterns in the data.

\para{SST Dataset.} SST-2 offers a comprehensive benchmark for fine-grained sentiment analysis by moving beyond simple document-level classification. Derived from movie reviews, this dataset is unique in that it provides sentiment labels not only for entire sentences but for every single phrase and constituent within their parse trees. This rich, granular annotation allows models to be trained and evaluated on understanding how sentiment is compositionally built from smaller units, capturing the nuanced impact of negations, contrasts, and intensifiers. The task is a binary classification of sentiment into positive or negative categories. SST-2's primary challenge and significance lie in pushing models to develop a more profound, structured understanding of sentiment composition, making it a standard for assessing nuanced semantic understanding.

\para{RT Dataset.} RT movie review dataset serves as a classic and widely-adopted benchmark for binary sentiment classification at the document or snippet level. Sourced from the critical review aggregator website, it consists of raw text snippets from professional movie reviews, each labeled with a broad positive or negative sentiment. In contrast to the phrase-level detail of SST-2, the RT dataset typically presents a more direct and practical sentiment analysis task: classifying the overall polarity of a given text segment. Its significance stems from its realism and simplicity, providing a robust testbed for a model's performance on a straightforward, real-world opinion mining task, and it is often used in direct comparison with other sentiment datasets to evaluate generalizability and robustness.

\begin{table}[!t]
	\centering
	\caption{Basic information of the dataset.}
	\label{tab:dataset}
%		\vspace{-6pt}
		\resizebox{0.45\textwidth}{!}{
		\begin{tabularx}{\linewidth}{cccc}
			\toprule
			\textbf{Dataset} & \textbf{Min text len} & \textbf{Max text len} & \textbf{Ave. text len}\\
			\midrule
			CoLA & 2 & 42 & 7.7\\
			\midrule
			SST & 2 & 52 & 19.14\\
			\midrule
			RT & 1 & 59 & 20.99\\
			\bottomrule
		\end{tabularx}
			}
				\vspace{-0.3cm}
\end{table}

\section{Baselines}\label{baselines}

We compare our attack with the following methods, including the optimization-based method (i.e., LAMP \cite{LAMP}) and the embedding-based method (i.e., DAGER \cite{dager}) designed for GIAs in
FedLLMs. 

\para{LAMP.} LAMP \cite{LAMP} is a novel and highly effective reconstruction attack designed to extract private, client-held text data from the gradient updates exchanged during the federated learning of transformer-based language models. The core innovation of LAMP lies in its two-pronged attack strategy, which alternates between continuous and discrete optimization. The continuous phase refines a set of dummy token embeddings by minimizing the difference between their resulting gradients and the target client's true gradients. Crucially, to overcome local optima and guide the reconstruction towards coherent text, LAMP introduces a discrete optimization phase that uses an auxiliary language model, such as GPT-2, as a prior. This phase generates candidate sequences through textual transformations and selects the one that best balances gradient-matching loss with low linguistic perplexity. This synergistic approach enables LAMP to significantly outperform prior attacks, successfully reconstructing longer, more accurate text sequences, including in challenging scenarios with batch sizes larger than one and against defended models, thereby exposing a greater vulnerability to privacy leakage in federated learning systems than previously known.

\para{DAGER.} DAGER \cite{dager} is a groundbreaking gradient inversion attack that, for the first time, achieves the exact reconstruction of entire batches of private text input to large language models (LLMs) from their gradients. Departing from the approximate, optimization-based approaches of prior methods, DAGER leverages two key properties of transformer models: the low-rank structure of self-attention layer gradients and the discrete nature of tokenized text. Its core mechanism involves a highly efficient ``span check'' that verifies whether a candidate token's embedding lies within the column space of the observed gradient matrices. This allows DAGER to first recover the set of all tokens present in a client's batch and then, through a greedy algorithm for decoders or a heuristic search for encoders, reassemble them into the original sequences with perfect accuracy. This approach enables DAGER to scale to significantly larger batch sizes (up to 128) and longer sequences than previously possible, achieving near-perfect ROUGE scores and dramatically outperforming prior attacks in both reconstruction quality and computational speed.
\begin{figure*}[!t]
	\centering
	\includegraphics[width=0.6\textwidth]{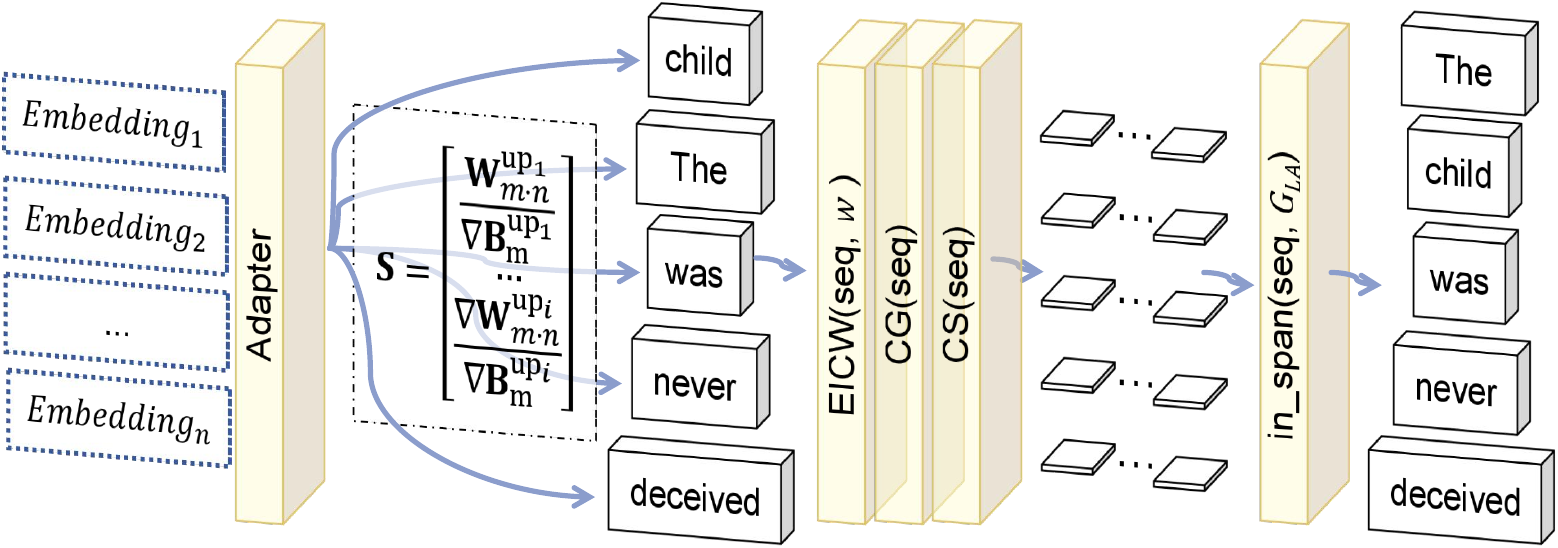}
	\caption{The showcase of the UTR attack.}\label{showcase}
\end{figure*}
 \begin{table*}[!t]
	\caption{An example of UTR on CoLA with batch size 2.}
	\label{T:utr}
	\centering
	\resizebox{\textwidth}{!}{
		
		\begin{tabular}{ccl}
			\hline 
			\multirow{2}{*}{Origin text} & \multicolumn{2}{l}{The child was never deceived.} \\
			& \multicolumn{2}{l}{Jason persuaded Medea that she should desert her family.} \\ \cline{1-3}
			
			\multirow{4}{*}{BERT} & \multirow{2}{*}{word bag} & \texttt{`\#\#ved',`should',`never',`desert',`dec',`med',`the',`was ',`child',`that',} \\
			& & \texttt{ `her', `she',`[CLS]',`[SEP]',`family',`jason',`.',`persuaded',`\#\#ei',`\#\#ea'} \\ \cline{2-3} 
			& \multirow{2}{*}{Recon. text} & \texttt{the child was never deceived. } \\ 
			& & \texttt{jason persuaded medea that she should desert her family.} \\ \hline
			
			\multirow{4}{*}{GPT} & \multirow{2}{*}{word bag} & \texttt{`.',`was ',`that',`The',`her',`she',`should',`child',`never',`family',`Med',} \\
			& & \texttt{`desert',`ea',`persuaded',`Jason',`deceived',`<|endoftext|>'} \\ \cline{2-3} 
			& \multirow{2}{*}{Recon. text} & \texttt{the child was never deceived. } \\ 
			& & \texttt{Jason persuaded Medea that she should desert her family.} \\ \hline
			
			\multirow{4}{*}{Qwen} & \multirow{2}{*}{word bag} & \texttt{`.', ` that', ` was ', `The', ` her', ` should', ` she', ` child', ` never', } \\
			& & \texttt{`ea', ` desert', `Jason', ` persuaded', ` deceived'} \\ \cline{2-3} 
			& \multirow{2}{*}{Recon. text} & \texttt{The child was never deceived. } \\ 
			& & \texttt{Jason persuaded Medea that she should desert her family.} \\ \hline
		\end{tabular}
	}
\end{table*}

\para{Discussion.} During experiments, we employed only the first two Transformer layers for two reasons. First, UTR requires gradient data from the embedding and first-layer adapters to evaluate privacy leakage, and these are fully captured within the initial layers. Second, due to computational constraints—specifically, a single NVIDIA RTX 4090 GPU (24 GB VRAM) on a Windows system—running the full model was infeasible. Thus, restricting to two layers provided an effective and practical balance between experimental validity and resource limitations.

\section{Attack Example of UTR}\label{example}

\para{Sample Reconstruction.} We show sample sequence reconstructions from UTR on CoLA with $B = 2$ in Table~\ref{T:utr}. We can observe that our reconstruction is coherent, which can successfully reconstruct the accurate original text and Fig. \ref{showcase} illustrates the showcase of our UTR.

\section{Additional Experimental Results}
\label{A1}

\para{Model Utility Evaluation under DP.} As illustrated in Table~\ref{tab:performance_high_dp}, when strong Differential Privacy (DP) guarantees are enforced, the model's utility drops to 69.20\%. This performance level remains substantially below what is considered sufficient for real-world deployment (80\%-82\%). 

\begin{table}[!t]
	\centering
	\caption{Model utility evaluation under DP.}\label{tab:performance_high_dp}
	\resizebox{0.4\textwidth}{!}{
		\begin{tabular}{cccc}
			\toprule
			\multirow{2}{*}{\textbf{Defense}} & \multirow{2}{*}{\textbf{Parameter}} & \multicolumn{2}{c}{BERT}  \\
			\cmidrule(r){3-4} 
		&  & Loss & Accuracy \\
		\midrule
		\multirow{2}{*}{\textbf{DP}} & $\sigma = 0.0001 $ & $\sim 0.42$ & $\sim 81.2\%$   \\
		& $\sigma = 10.0$ & $\sim 0.61$ & $\sim 69.2\%$ \\
			\bottomrule
	\end{tabular}}
\end{table}

\para{Attack Performance Evaluation under Structural Defense.} In response to the suggestion regarding structural defenses, we note that while such architectural modifications are generally avoided to prevent client-side heterogeneity, they offer valuable insights in an experimental context. Our supplementary tests reveal that architectural hardening provides measurable benefits: substituting ReLU with GELU lowered reconstruction by $\approx 15\%$, while utilizing deeper adapters (depth=3) reduced leakage by $\approx 10\%$ ($B < 64$) and $\approx 80\%$ ($B \geq 64$). These results align with and support our primary defense analysis.

\begin{table}[!t]
\centering
\caption{Computational resource overhead for data reconstruction across methods and tasks.}
\label{tab:compute_overhead}
\resizebox{0.4\textwidth}{!}{
\begin{tabular}{lccc}
\toprule
\textbf{Task} & \textbf{Method} & \textbf{$B$} & \textbf{GPU Time (s)} \\
\midrule
RT-GPT2            & UTR             & 128                 & 6,125 \\
RT-GPT2            & FILM            & 128                 & $\sim$26,000 \\
CoLA-GPT2     & UTR             & 64                  & 900 \\
CoLA-GPT2     & FILM            & 64                  & $\sim$54,000 \\
\bottomrule
\end{tabular}}
\end{table}

\para{Attack Efforts Evaluation.} We further evaluated the computational efficiency of UTR on open-source benchmarks. As summarized in Table~\ref{tab:compute_overhead}, UTR achieves significantly lower computational overhead compared to state-of-the-art methods. For instance, on the RT dataset ($B=128$), UTR requires only 6,125 GPU-seconds ($\sim$1.7 GPU-hours) to reconstruct \textit{all} batches, which is over 5× more efficient than FILM. Moreover, on the CoLA-GPT2 task ($B=64$), UTR completes reconstruction in approximately 15 minutes, yielding a speedup of more than $60\times$ relative to FILM. 

\end{document}